\newcommand{\p}{^{+} }
\newcommand{\m}{^{-} }
\newtheorem{thm}{Theorem}
\newtheorem{defn}{Definition}[section]
\newtheorem{cor}{Corollary}[section]
\newtheorem{prop}{Proposition}[section]
\newtheorem{propty}{Property}[section]
\newtheorem{rem}{Remark}[section]
\begin{document}

\begin{frontmatter}

\title{Skellam Shrinkage: Wavelet-Based Intensity Estimation for Inhomogeneous Poisson Data}
\author{Keigo~Hirakawa and~Patrick~J.~Wolfe\protect\thanksref{T1}}
          \thankstext{T1}{This material is based upon work
          supported in part by the National Science Foundation under Grant DMS-0652743.  The result of Theorem~\ref{thm:skellam} was published at the IS\&T/SPIE 20th Annual Symposium on Electronic Imaging Science and Technology~\cite{HirakawaWolfe09a} in January 2009, and the IEEE International Conference on Acoustics, Speech, and Signal Processing (ICASSP)~\cite{ref:HirakawaWolfe09} in March 2009.  This theorem was also recently obtained independently by~Luisier et al.~\cite{ref:Luisier_2009}, due to appear at the 6th IEEE International Symposium on Biomedical Imaging in June 2009.
          \newline \indent The authors are with the Statistics and Information Sciences Laboratory, Harvard University, Cambridge, MA 02138 USA (e-mail: \{hirakawa, wolfe\}@stat.harvard.edu).
          }
\affiliation{Harvard University}
\runtitle{Skellam Shrinkage}
\runauthor{K.~Hirakawa and P.~J.~Wolfe}
\begin{abstract}%
The ubiquity of integrating detectors in imaging and other applications implies that a variety of real-world data are well modeled as Poisson random variables whose means are in turn proportional to an underlying vector-valued signal of interest.  In this article, we first show how the so-called Skellam distribution arises from the fact that Haar wavelet and filterbank transform coefficients corresponding to measurements of this type are distributed as sums and differences of Poisson counts.  We then provide two main theorems on Skellam shrinkage, one showing the near-optimality of shrinkage in the Bayesian setting and the other providing for unbiased risk estimation in a frequentist context.  These results serve to yield new estimators in the Haar transform domain, including an unbiased risk estimate for shrinkage of Haar-Fisz variance-stabilized data, along with accompanying low-complexity algorithms for inference.  We conclude with a simulation study demonstrating the efficacy of our Skellam shrinkage estimators both for the standard univariate wavelet test functions as well as a variety of test images taken from the image processing literature, confirming that they offer substantial performance improvements over existing alternatives.
\end{abstract}

\begin{keyword}[class=AMS]
\kwd[Primary ]{62J07}
\kwd[; secondary ]{62P30}
\kwd{62P10}
\end{keyword}

\begin{keyword}
\kwd{Haar transform}
\kwd{Poisson distribution}
\kwd{Skellam distribution}
\kwd{Unbiased risk Estimation}
\kwd{Wavelet shrinkage}
\end{keyword}

\end{frontmatter}

\section{Introduction}
\label{sec:intro}

Real-world information sensing and transmission devices are subject to various types of measurement noise; for example, losses in resolution (e.g.,~quantization effects), randomness inherent in the signal of interest (e.g.,~photon or packet arrivals), and variabilities in physical devices (e.g.,~thermal noise, electron leakage) can all contribute significantly to signal degradation. Estimation of a vector-valued signal $\bm{f}\in\mathbb{R}^N$ given noisy observations $\bm{g}\in\mathbb{R}^N$ therefore plays a prominent role in a variety of engineering applications such as signal processing, digital communications, and imaging.

At the same time, statistical modeling of transform coefficients as latent variables has enjoyed tremendous popularity across these diverse applications---in particular, wavelets and other filterbank transforms provide convenient platforms; as is by now universally acknowledged, such classes of transform coefficients tend to exhibit temporal and spectral decorrelation and energy compaction properties for a variety of data. In this setting, the special case of additive white Gaussian noise is by far the most studied scenario, as the posterior distribution of coefficients is readily accessible when the likelihood function admits a closed form in the transform domain.

The twin assumptions of additivity and Gaussianity, however, are clearly inadequate for many genuine engineering applications; for instance, measurement noise is often dependent on the range space of the signal $\bm{f}$, effects of which permeate across multiple transform coefficients and subbands~\cite{ref:Hirakawa_2008c}.   For instance, the number of photoelectrons $g_i$ accumulated by the $i$th element of a photodiode sensor array---an integrating detector that ``counts photons''---is well modeled as a Poisson random variable $g_i \sim \mathcal{P}(f_i)$, where $f_i$ is proportional to the average incident photon flux density at the $i$th sensor element.

Recall that for $g_i \sim \mathcal{P}(f_i)$ we have that $\operatorname{\mathbb{E}} g_i = \operatorname{Var}g_i = f_i$, and so in the case at hand $f_i$ reflects (up to quantum efficiency) the $i$th expected photoelectron count, with the resultant ``noise'' in the form of variability being signal-dependent and hence heteroscedastic.  Indeed, the local signal-to-noise ratio at the $i$th sensor element is seen to grow linearly with signal strength as $\operatorname{\mathbb{E}} g_i^2 / \operatorname{Var}g_i $  = $ 1 + f_i$, implying very noisy conditions when dealing with inefficient detectors or low photon counts.

Classical variance stabilization techniques dating back to Bartlett and Anscombe~\cite{ref:Bartlett_1936, ref:Anscombe_1948, ref:Freeman_1950, ref:Fisz_1955, ref:Tweedie_1968, ref:Veevers_1971} yield an approach to Poisson mean estimation designed to recover homoscedasticity, with~\cite{ref:Fryzlewicz_2004} providing a summary of more recent work.   Here one seeks an invertible operator $\bm{\gamma}:\mathbb{Z}_+^N \to \mathbb{R}^N$, typically by way of a compressive nonlinearity such as the component-wise square root, that (approximately) maps the heteroscedastic realizations of an inhomogeneous Poisson process to the familiar additive white Gaussian setting:
\begin{align*}
g_i \sim \mathcal{P}(f_i), \, i\in\{1,2,\ldots,N\} \quad \mapsto \quad
\bm{\gamma}(\bm{g}) \sim \mathcal{N}(\bm{\gamma}(\bm{f}),\bm{I}_N)\text{.}
\end{align*}
Standard techniques may then be used to estimate $\bm{\gamma}(\bm{f})$ directly, with the inverse transform
$\bm{\gamma}^{-1}(\cdot)$ applied post hoc.

Inhomogeneous Poisson data can also be treated directly.  For instance, empirical Bayes approaches leverage the independence of Poisson variates via their empirical marginal distributions~\cite{ref:Robbins_1956,ref:Raphan_2007}, while multiparameter estimators borrow strength to improve upon maximum-likelihood estimation~\cite{ref:Clevenson_1975, ref:Hudson_1978, ref:Ghosh_1983}; however, this ignores potential correlations amongst elements of $\bm{f}$.  To address such concerns, multiresolution approaches to Poisson intensity estimation were introduced to explicitly encode the dependencies between the Poisson variables in the context of Haar frames~\cite{ref:Timmermann_1999, ref:Kolaczyk_1999b, ref:Nowak_2000, ref:Kolaczyk_2004}.  The relative merits of the various methods described above are well documented~\cite{ref:Tweedie_1968, ref:Veevers_1971, ref:Besbeas_2004, ref:Jansen_2006, ref:Willett_2007} and will not be repeated here.

In this paper, we address Poisson rate estimation directly in the Haar wavelet and Haar
filterbank transform domains by way of the \emph{Skellam distribution}~\cite{ref:Skellam_1946}, whose use to date has been limited to special settings~\cite{ref:Karlis_2003, ref:Karlis_2006, ref:Hwang_2007a, ref:Hwang_2007b, ref:Zhang_2008}.  After briefly reviewing wavelet and filterbank coefficient models in Section~\ref{sec:wavelets}, we then describe in Section~\ref{sec:estimation} new Bayesian and frequentist transform-domain estimators for both exact and approximate inference.  Here we first derive posterior means under canonical heavy-tailed priors, along with analytical approximations to the optimal estimators that we show to be both efficient and practical.  We then show how inhomogeneous Poisson variability leads to a variant of Stein's unbiased risk estimation~\cite{ref:Stein_1981} for parametric estimators in the transform domain.  Simulation studies presented in Section~\ref{sec:results} verify the effectiveness of our approach, and we conclude with a brief discussion in Section~\ref{sec:discussion}.

\section{Wavelet and Filterbank Coefficient Models}
\label{sec:wavelets}

\subsection{Haar Wavelet and Filterbank Transforms}
\label{sec:review_wavelets}

Consider a nested sequence of closed subspaces $\{V_k\}_{k\in Z}$ of $L^2(\mathbb{R})$ satisfying the axioms required of a multiresolution analysis~\cite{ref:Mallat_1998}.  Then there exists a \emph{scaling function} $\phi \in L^2(\mathbb{R})$ such that the family $\{2^{-k/2}\phi\left(2^{-k}(\cdot-i)\right)\}_{i\in\mathbb{Z}}$ is an orthonormal basis of $V_k$ for all $k\in\mathbb{Z}$.   There also exist a corresponding \emph{conjugate mirror filter} sequence $\{h_i\}_{i\in\mathbb{Z}}$ and \emph{admissible wavelet} $\psi$, with Fourier transforms $\hat{h},\hat{\psi}$ respectively, satisfying
\begin{equation*} 
\begin{cases}
\hat{\phi}(2\omega)=& 2^{-1/2} \hat{h}(\omega)\hat{\phi}(\omega) \text{,}\\
\hat{\psi}(2\omega)=& 2^{-1/2} e^{-j\omega} \hat{h}^*(\omega-\pi)\hat{\phi}(\omega) \text{.}
\end{cases}
\end{equation*}
Moreover, for any fixed scale $2^k$ the wavelet family $\{2^{-k/2}\psi\left(\cdot/2^k -i\right)\}_{i\in\mathbb{Z}}$ forms an orthonormal basis of the orthogonal complement of $V_k$ in $V_{k-1}$, and for all $(i,k)\in\mathbb{Z}^2$ the wavelet families together comprise an orthobasis of $L^2(\mathbb{R})$.

Recursively expanding the above $K$ times, and defining
\begin{align*}
\begin{cases}
\text{\emph{Wavelet coefficient}} & x_{k,i} := \langle f,2^{-k/2}\psi(\cdot/2^k-i)\rangle \\
\text{\emph{Scaling coefficient}} & s_{k,i} := \langle f,2^{-k/2}\phi(\cdot/2^k-i)\rangle \text{,}
\end{cases}
\end{align*}
we see that any $f \in L^2(\mathbb{R})$ admits the following orthobasis expansion in terms of its wavelet and scaling coefficients:
\begin{align*}
f&=\sum_{i=-\infty}^{\infty}s_{0,i} \, \phi(\cdot-i)\\
&=\sum_{i=-\infty}^{\infty}\frac{s_{K,i}}{2^{\frac{K}{2}}}\phi\left(\frac{\cdot-2^Ki}{2^K}\right)
+\sum_{k=1}^{K}\sum_{i=-\infty}^{\infty}\frac{x_{k,i}}{2^{\frac{k}{2}}}\psi\left(\frac{\cdot-2^ki}{2^k}\right)\text{.}
\end{align*}
The mapping
$f\mapsto\{s_{K,i},x_{k,i}\}$ is termed a $K$-level continuous wavelet transform, with an analogous discrete wavelet transform defined for sequences in $\ell^2(\mathbb{Z})$.

For the special case of a Haar wavelet transform, we take as our scaling function $\phi = \mathbb{I}_{[0,1]}$
(the unit indicator), with $h_i = \langle 2^{-1/2} \phi(\cdot/2),\phi(\cdot-i) \rangle$ yielding $h_0=h_1=2^{-1/2}$ as the only nonzero conjugate mirror filter values.  This in turn induces a recursive relationship as follows:
\begin{equation}
\label{eqn:HWT}
\begin{cases}
x_{k,i}=s_{k-1,2i}-s_{k-1,2i+1} \text{,} \\
s_{k,i}=s_{k-1,2i}+s_{k-1,2i+1} \text{.}
\end{cases}
\end{equation}

In fact, this one-level transform is a version of a filterbank transform---a
canonical multirate system of the type used for time-frequency analysis in
digital signal processing. That is, $\hat{h}$ satisfies the perfect reconstruction
condition~\cite{ref:Mallat_1998}
\begin{equation*}
\begin{cases}
\hat{h}^*(\omega)\hat{h}(\omega)+\hat{h}^*(\omega-\pi)\hat{h}(\omega-\pi)=\text{const} \text{,} \\
\hat{h}^*(\omega)\hat{h}(\omega-\pi)+\hat{h}^*(\omega-\pi)\hat{h}(\omega)=0 \text{.}
\end{cases}
\end{equation*}

In the formulation of~\eqref{eqn:HWT}, each sequence $\{s_{k-1,i}\}_i$ is decomposed into lowpass and highpass components $\{s_{k,i},x_{k,i}\}_i$ in turn. A recursive application of the map
$\{s_{k-1,i}\}\mapsto\{s_{k,i},x_{k,i}\}$ yields the Haar wavelet
transform, whereas the same transform applied to highpass component $x_{k-1,i}$ further
decomposes it into narrower bands. Recursive decomposition of
\emph{both} lowpass and highpass sequences in this way yields the Hadamard
transform, otherwise known as the Haar filterbank transform.

The low computational requirements of these transforms make them attractive alternatives to other joint
time-frequency analysis techniques possessing better frequency
localization. The Haar transforms enjoy orthogonality, compact spatial support, and
computational simplicity, with the Haar wavelet transform satisfying the axioms of a multiresolution
analysis. We later demonstrate how their simplicity serves to admit analytical
tractability that in turn enables efficient inference and estimation procedures.

As a final note, we omit subband index $k$ in the sequel, as wavelet coefficients
$x_{k,i}$ are always aggregated within a given scale $2^k$; for notational clarity in the finite-dimensional setting, further suppression of subscript $i$ will be used to indicate a generic scalar coefficient $x_{(\cdot)}$, as distinct from vector-valued quantities (e.g., $\bm{x}$) indicated in bold throughout.

\subsection{Transform-Domain Denoising}
\label{sec:review_denoising}

Turning to the problem of transform-domain denoising, consider the case whereupon a vector
of noisy orthobasis coefficients ${\bm{y}}\sim\mathcal{N}(\bm{x},{\sigma}^2\bm{I}_N)$ is observed, with $\bm{x}$ deterministic but unknown. Writing an estimator for $\bm{x}$ as $\widehat{\bm{X}}(\bm{Y}) = \bm{Y} + \bm{\theta}(\bm{Y})$, Stein's Lemma~\cite{ref:Stein_1981} may be used to formulate an unbiased estimate of the associated $\ell^2$ risk $\operatorname{\mathbb{E}} \|\widehat{\bm{X}} - \bm{x}\|_2^2$ as follows.

\begin{thm}[Stein's Unbiased Risk Estimate (SURE)~\cite{ref:Stein_1981}]\label{lem:stein_lem}
Let $\bm{y} \sim \mathcal{N}(\bm{x},\sigma^2\bm{I}_N)$, with $\bm{x}$ unknown, and fix an estimator $\widehat{\bm{X}}(\bm{Y})=\bm{Y} + \bm{\theta}(\bm{Y})$ such that $\bm{\theta}:\mathbb{R}^N\to\mathbb{R}^N$ is weakly differentiable.  Then the resultant risk may be formulated as
\begin{equation}
\label{eqn:Stein}
\operatorname{\mathbb{E}} \| \widehat{\bm{X}}\!(\bm{Y})-\bm{x} \|_2^2
=  N{\sigma}^2 + \operatorname{\mathbb{E}} \left[ \| \bm{\theta}(\bm{Y}) \|_2^2 + 2{\sigma}^2 \operatorname{div} \bm{\theta}(\bm{Y}) \right] \text{,}
\end{equation}
with $N{\sigma}^2 + \| \bm{\theta}(\bm{y}) \|_2^2 + 2{\sigma}^2 \operatorname{div}\bm{\theta}(\bm{y})$ an unbiased estimate thereof.
\end{thm}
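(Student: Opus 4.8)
The plan is to expand the squared error into manageable pieces and reduce everything to a single integration-by-parts identity. Writing $\widehat{\bm{X}}(\bm{Y})-\bm{x}=(\bm{Y}-\bm{x})+\bm{\theta}(\bm{Y})$, we have
\[
\|\widehat{\bm{X}}(\bm{Y})-\bm{x}\|_2^2=\|\bm{Y}-\bm{x}\|_2^2+2\langle \bm{Y}-\bm{x},\bm{\theta}(\bm{Y})\rangle+\|\bm{\theta}(\bm{Y})\|_2^2\text{.}
\]
Taking expectations, the first term is exactly $N\sigma^2$ since the $Y_i-x_i$ are i.i.d.\ $\mathcal{N}(0,\sigma^2)$, and the last term is left as is; the whole result then reduces to establishing the multivariate Stein identity $\mathbb{E}\langle \bm{Y}-\bm{x},\bm{\theta}(\bm{Y})\rangle=\sigma^2\,\mathbb{E}\operatorname{div}\bm{\theta}(\bm{Y})$.

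Second, I would prove the scalar version first: if $Z\sim\mathcal{N}(\mu,\sigma^2)$ has density $p_\sigma(z-\mu)\propto\exp\!\big(-(z-\mu)^2/2\sigma^2\big)$ and $g$ is weakly differentiable with $\mathbb{E}|g'(Z)|<\infty$, then $\mathbb{E}[(Z-\mu)g(Z)]=\sigma^2\,\mathbb{E}[g'(Z)]$. The engine is the elementary relation $(z-\mu)p_\sigma(z-\mu)=-\sigma^2\tfrac{d}{dz}p_\sigma(z-\mu)$ followed by integration by parts; the boundary contributions vanish because the Gaussian density and its derivative decay super-polynomially while $g$ is controlled by its integrable derivative. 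To handle mere weak differentiability I would verify the identity first for smooth compactly supported $g$ directly from the definition of the weak derivative, then extend by a truncation-and-mollification argument together with dominated convergence.

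Third, I would lift this to $\mathbb{R}^N$ coordinatewise. Fixing $i$ and conditioning on $(Y_j)_{j\ne i}$, Fubini for Sobolev functions guarantees that $t\mapsto\theta_i(Y_1,\dots,t,\dots,Y_N)$ is weakly differentiable in $t$ for almost every fixing of the remaining coordinates, so the scalar identity yields
\[
\mathbb{E}\big[(Y_i-x_i)\theta_i(\bm{Y})\,\big|\,(Y_j)_{j\ne i}\big]=\sigma^2\,\mathbb{E}\!\left[\frac{\partial\theta_i}{\partial y_i}(\bm{Y})\,\Big|\,(Y_j)_{j\ne i}\right]\text{.}
\]
Taking expectations and summing over $i=1,\dots,N$ gives $\mathbb{E}\langle \bm{Y}-\bm{x},\bm{\theta}(\bm{Y})\rangle=\sigma^2\,\mathbb{E}\operatorname{div}\bm{\theta}(\bm{Y})$, and substituting back into the expansion produces \eqref{eqn:Stein}. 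Since the random quantity $N\sigma^2+\|\bm{\theta}(\bm{y})\|_2^2+2\sigma^2\operatorname{div}\bm{\theta}(\bm{y})$ has just been shown to have expectation equal to the risk, it is by definition an unbiased estimate thereof.

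I expect the main obstacle to be the careful justification of the integration by parts under only weak differentiability---controlling the boundary/tail terms, invoking Fubini to reduce the $N$-dimensional weak derivative to one-dimensional sections, and doing the integrability bookkeeping so that every expectation above is finite and the interchange of summation and expectation is legitimate. Once the scalar Stein identity is pinned down rigorously, the remainder is routine.
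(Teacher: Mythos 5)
Your proposal is correct, but note that the paper itself gives no proof of this theorem---it is quoted as a classical result with a citation to Stein (1981). Your argument (expand the quadratic loss, reduce to the identity $\operatorname{\mathbb{E}}\langle \bm{Y}-\bm{x},\bm{\theta}(\bm{Y})\rangle=\sigma^2\operatorname{\mathbb{E}}\operatorname{div}\bm{\theta}(\bm{Y})$, prove the scalar Gaussian integration-by-parts lemma, and lift it coordinatewise via Fubini with the usual care about weak differentiability and integrability) is precisely the standard proof going back to Stein's original paper, so there is nothing to reconcile with the source beyond the implicit finiteness hypotheses ($\operatorname{\mathbb{E}}\lvert\partial\theta_i/\partial y_i\rvert<\infty$, etc.) that you already flag.
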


Hence, by replacing the latter expectation of~\eqref{eqn:Stein} with an evaluation over the vector $\bm{y}$ of observed transform coefficients, one may directly optimize parameter choices for nonlinear shrinkage estimators---for example soft thresholding, given by
\begin{equation}
\label{eqn:softThresh}
\hat{X}_i({Y}_i; \tau) := \operatorname{sgn}({Y}_i)\max(|{Y}_i|-\tau,0) \text{.}
\end{equation}
As an example that we shall return to later, SUREShrink~\cite{ref:Donoho_1995} is obtained from~\eqref{eqn:Stein} and~\eqref{eqn:softThresh} by writing $\widehat{\bm{X}}(\bm{Y}) = \bm{Y} + \bm{\theta}(\bm{Y}; \tau)$:
\begin{align}\label{eqn:st_theta}
\theta({Y}_i; \tau)=&\begin{cases}
-\operatorname{sgn}({Y}_i)\,\tau&\text{if $|{Y}_i|\geq\tau$}\\
-{Y}_i&\text{if $|{Y}_i|<\tau$}
\end{cases}\\
\frac{\partial}{\partial{Y}_i}\theta({Y}_i; \tau)=&\begin{cases}
0&\text{if $|{Y}_i|\geq\tau$}\\
-1&\text{if $|{Y}_i|<\tau$} \text{,}\notag
\end{cases}
\end{align}
and thus $\tau$ is chosen to minimize the empirical risk estimate
\begin{equation}
\label{eqn:SURE}
N\sigma^2 + \sum_{i=1}^N \min (y_i^2,\tau^2) - 2\sigma^2\,\#\{i: |y_i|<\tau\}
\text{.}
\end{equation}

\subsection{The Skellam Distribution}
\label{sec:skellam}

In contrast to the above setting of additive white Gaussian noise, the distribution of inhomogeneous Poisson data $\bm{g}: g_i \sim \mathcal{P}(f_i)$ is \emph{not} invariant under orthogonal transformation---and so transform-domain denoising ceases to be as straightforward in the general setting~\cite{ref:Hirakawa_2008c}.  However, for the special cases of the Haar wavelet and filterbank transforms described in Section~\ref{sec:review_wavelets}, we may characterize their coefficient distributions in closed form as sums and differences of Poisson counts.

To this end, let the matrix $\bm{W}\in\{0,\pm 1\}^{N\times N}$ denote an (unnormalized) Haar filterbank transform.  Taking $\bm{x} := \bm{W} \! \bm{f}$ to be the transform of $\bm{f} \in \mathbb{Z}_+^N$, the resultant wavelet and scaling coefficients comprise sums and differences of elements of $\bm{f}$:
\begin{align}
\bm{x} := \bm{W} \! \bm{f} & \,\Rightarrow\,
\begin{cases}
\text{\emph{Wavelet coefficient}} & x_i = x_i\p - x_i\m \text{;}\\
\text{\emph{Scaling coefficient}} & s_i = x_i\p + x_i\m \text{;}
\end{cases} \label{eqn:wavsca} \\
x_i\p & := \sum_{j:W_{ij}=1}f_j, \quad x_i\m : = \sum_{j:W_{ij}=-1}f_j\text{.}  \label{eqn:pm}
\end{align}

An analogous definition with respect to the observed data
$g_i \sim \mathcal{P}(f_i)$ and its Haar filterbank transform $\bm{y} := \bm{W} \! \bm{g}$ implies that
the empirical wavelet and scaling coefficients themselves comprise sums and differences of Poisson counts:
\begin{align}
\bm{W} \! \bm{g} & \Rightarrow
\begin{cases}
\text{\emph{Empirical wavelet coefficient}} & \!\!\!\! y_i = y_i\p - y_i\m \text{;} \\
\text{\emph{Empirical scaling coefficient}} & \!\!\!\! t_i = y_i\p + y_i\m \text{;}
\end{cases} \label{eqn:ObsWav} \\
& y_i\p \sim \mathcal{P}(x_i\p), \quad y_i\m \sim \mathcal{P}(x_i\m),
\quad t_i \sim \mathcal{P}(s_i) \text{.} \label{eqn:ypm}
\end{align}

Thus the empirical coefficients defined by~\eqref{eqn:ObsWav} are effectively corrupted versions of those in~\eqref{eqn:wavsca}.  While the sum of Poisson variates $y_i\p$ and $y_i\m$ is again Poisson, as indicated by the expression of~\eqref{eqn:ypm} for empirical scaling coefficient $t_i$, the distribution of their difference also admits a closed-form expression, first characterized by Skellam~\cite{ref:Skellam_1946} using generating functions.

\begin{prop}
\label{prop:PoissonDiff}
Fix $x\p,x\m \in \mathbb{R}_+$, and let the random variable $Y\in\mathbb{Z}$ denote the difference of two Poisson variates $y\p \sim \mathcal{P}(x\p)$ and $y\m \sim \mathcal{P}(x\m)$.  Defining $I_{y}(\cdot)$ to be the $y$th-order modified Bessel function of the first kind, we have that
\begin{align}\label{eqn:SkellamPoisson}
\operatorname{Pr}(Y\!=\!y\,;x\p,x\m) \!= & \, e^{-(x\p\!+x\m)}\!\left(\!\frac{x\p}{x\m}\!\right)^{\!\frac{y}{2}}\!I_{y\!}\!\left(2\sqrt{x\p x\m}\right) \text{,}
\\ & y \in \mathbb{Z}; \quad x\p,x\m \in \mathbb{R}_+ \text{.} \notag
\end{align}
\end{prop}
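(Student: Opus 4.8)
The plan is to obtain the mass function directly from the convolution of the two independent Poisson laws and then to recognize the resulting power series as a modified Bessel function. First I would use the independence of $y\p$ and $y\m$ to write, for $y \geq 0$,
\begin{equation*}
\operatorname{Pr}(Y = y) = \sum_{m=0}^{\infty} \operatorname{Pr}(y\p = y+m)\,\operatorname{Pr}(y\m = m)
= e^{-(x\p + x\m)} \sum_{m=0}^{\infty} \frac{(x\p)^{y+m}(x\m)^{m}}{(y+m)!\,m!}\text{,}
\end{equation*}
the sum ranging over all $m$ with $y+m \geq 0$. Pulling out $(x\p)^{y}$ leaves the series $\sum_{m\geq 0}(x\p x\m)^{m}/[(y+m)!\,m!]$, which is the object to be identified.

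Next I would recall the series definition of the modified Bessel function of the first kind,
\begin{equation*}
I_{\nu}(z) = \sum_{m=0}^{\infty} \frac{1}{m!\,\Gamma(m+\nu+1)} \left(\frac{z}{2}\right)^{2m+\nu}\text{,}
\end{equation*}
so that taking $\nu = y$ (a nonnegative integer) and $z = 2\sqrt{x\p x\m}$ gives $I_{y}(2\sqrt{x\p x\m}) = (x\p x\m)^{y/2}\sum_{m\geq 0}(x\p x\m)^{m}/[(y+m)!\,m!]$. Substituting this back and collecting the powers of $x\p$ and $x\m$ yields precisely $e^{-(x\p + x\m)}(x\p/x\m)^{y/2} I_{y}(2\sqrt{x\p x\m})$, the claimed expression for $y \geq 0$.

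It then remains to treat $y < 0$. Writing $y = -\ell$ with $\ell > 0$ and reindexing the convolution sum by $n = m - \ell$ reproduces the same computation with the roles of $x\p$ and $x\m$ interchanged, giving $e^{-(x\p + x\m)}(x\m/x\p)^{\ell/2} I_{\ell}(2\sqrt{x\p x\m})$; appealing to the symmetry $I_{-n}(\cdot) = I_{n}(\cdot)$ valid at integer order shows this equals the stated formula evaluated at $y = -\ell$. The one point worth stating carefully — the only genuine subtlety here — is exactly this use of $I_{-n} = I_{n}$, since the naive series for $I_{\nu}$ at negative integer order has its leading terms killed by the poles of $\Gamma$, so one must either reindex as above or invoke the standard analytic continuation. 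As an alternative I could follow Skellam's original generating-function argument: the probability generating function of $y\p - y\m$ is $e^{-(x\p + x\m)}\exp(x\p z + x\m/z)$, and matching this against the Laurent expansion $\sum_{n\in\mathbb{Z}} I_{n}(w)\,z^{n} = \exp\{(w/2)(z + 1/z)\}$ under the substitution $z \mapsto z\sqrt{x\p/x\m}$, $w = 2\sqrt{x\p x\m}$, recovers the mass function as the coefficient of $z^{y}$.
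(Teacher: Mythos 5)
Your proposal is correct and follows essentially the same route as the paper: both write the mass function as the discrete convolution of the two Poisson laws and identify the resulting series with the Taylor expansion of $I_{\nu}$, invoking $I_{-\nu}=I_{\nu}$ at integer order to cover negative $y$ (the paper merely compresses your two cases into one via the substitution $\max(y,0)=(|y|+y)/2$, which makes the summand symmetric in $y$). Your explicit flagging of the negative-order subtlety, and the alternative generating-function sketch, are fine but not a different method.
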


\begin{proof}
A direct verification is provided by series representations of Bessel functions~\cite{ref:Gradshteyn_2007}.  First, note that via correlation of Poisson densities we obtain directly
\begin{equation}\label{eq:skellamConv}
\operatorname{Pr}(Y\!=\!y\,;x\p,x\m) = e^{-(x\p+x\m)} \!\!\!\!\!\!\!\! \sum_{k=\max(y,0)}^\infty \!\!\!\!\! \frac{(x\p)^{k}(x\m)^{k-y}}{k!\,(k-y)!}
\text{,}
\end{equation}
By change of variables in the summation index of~\eqref{eq:skellamConv} according to $\max(y,0) = (|y|+y)/2$, we obtain a summand that is symmetric in $y \in \mathbb{Z}$ as follows:
\begin{equation*}
\operatorname{Pr}(Y\!=\!y\,;x\p,x\m) = e^{-(x\p+x\m)}
\left(\!\frac{x\p}{x\m}\!\right)^{\!\frac{y}{2}} \sum_{k=0}^\infty \frac{(x\p x\m)^{k+\frac{|y|}{2}}}{k!\,(|y|+k)!}
\text{.}
\end{equation*}
The result follows from the observation that $I_{\nu}(\cdot)$ admits, for positive argument and order, the real-valued Taylor expansion
\begin{equation*}
I_{\nu\!}(t) = \sum_{k=0}^\infty \frac{\left(t/2\right)^{2k+\nu}}{k! \, \Gamma(\nu+k+1)}; \quad \nu,t \in \mathbb{R}_+ \text{,}
\end{equation*}
coupled with the fact that $I_{-\nu}(\cdot) = I_{\nu}(\cdot)$ for $\nu \in \mathbb{N}$. 
\end{proof}

We have thus proved that the distribution of each empirical coefficient $y_i = y\p_i - y\m_i$ in~\eqref{eqn:ObsWav} 
may be described as follows.
\begin{defn}[Skellam Distribution~\cite{ref:Skellam_1946}]
\label{def:skellam}
Let $Y \in \mathbb{Z}$ denote a difference of Poisson variates according to~\eqref{eqn:wavsca}--\eqref{eqn:ypm}, with index $i$ suppressed for clarity as in Proposition~\ref{prop:PoissonDiff}.  Then
\begin{equation*}
\operatorname{\mathbb{E}} Y = x\p - x\m = x, \quad \operatorname{Var} Y = x\p + x\m = s \text{,}
\end{equation*}
where $s \geq |x|$, and variate $y$ takes the \emph{Skellam distribution}:
\begin{align}
& y \sim \mathcal{S}(x,s); \qquad s \in \mathbb{R}_+, \, -s \leq x \leq s \notag \\
& 
p(y\,;x,s)
= e^{-s}\left(\!\frac{s+x}{s-x}\!\right)^{\!\frac{y}{2}}\!I_{y\!}\!\left(\sqrt{s^2-x^2}\right)
\label{eq:skellamPMF}
\text{.}
\end{align}
\end{defn}

\begin{rem}[Support and Limiting Cases]
\end{rem}\vspace{-0.75\baselineskip}
As the difference of two Poisson variates, a Skellam variate ranges over the integers unless either $x\p, x\m=0$, in which case a direct appeal to the discrete convolution of~\eqref{eq:skellamConv} recovers the limiting Poisson cases.
On the other hand, as both $x\p, x\m \to \infty$, it follows from the Central Limit Theorem that the distribution of a Skellam variate tends toward that of a Normal. 

\begin{rem}[Skewness and Symmetry]
\end{rem}\vspace{-0.75\baselineskip}
The skewness of a Skellam random variable is easily obtained from its generating function as $s^{-3/2}x$~\cite{ref:Skellam_1946}, and hence is proportional to the difference in Poisson means $x\p$ and $x\m$, with a rate that grows in inverse proportion to their sum.  Indeed, when $x=0$ the distribution is symmetric,
with variance $s$ proportional to the geometric mean of $x\p$ and $x\m$ according to~\eqref{eqn:SkellamPoisson}.  A standard $\mathcal{S}(0,1)$ Skellam random variable is shown in Fig.~\ref{fig:SkellamStandard},
\begin{figure*}[!t]
\centering
\subfigure[\label{fig:SkellamStandard}Standard Skellam distribution $\mathcal{S}(0,1)$]{%
\includegraphics[width=.49\textwidth]{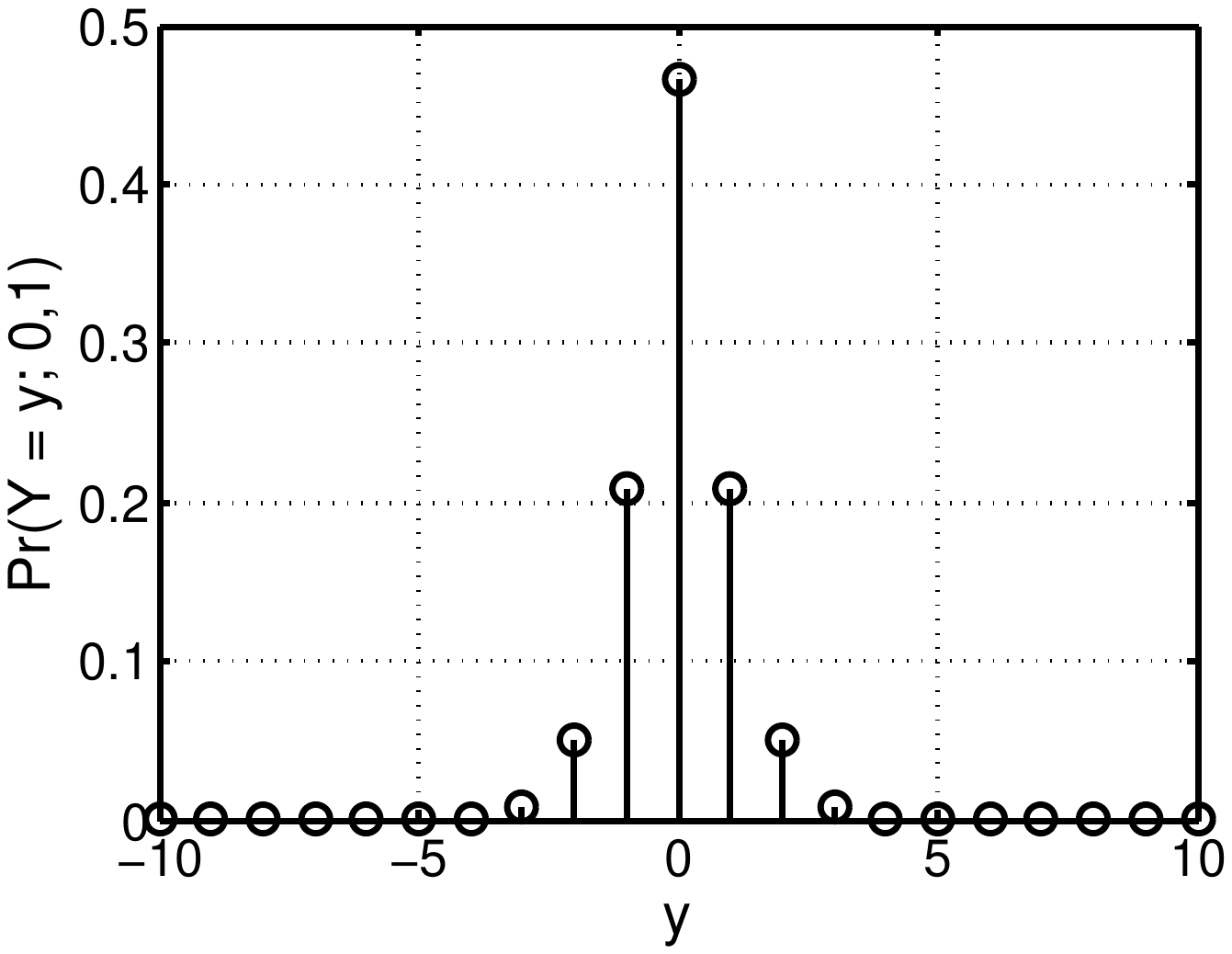}}\\
\subfigure[\label{fig:SkellamTail}Tail behavior with increasing variance]{%
\includegraphics[width=.49\textwidth]{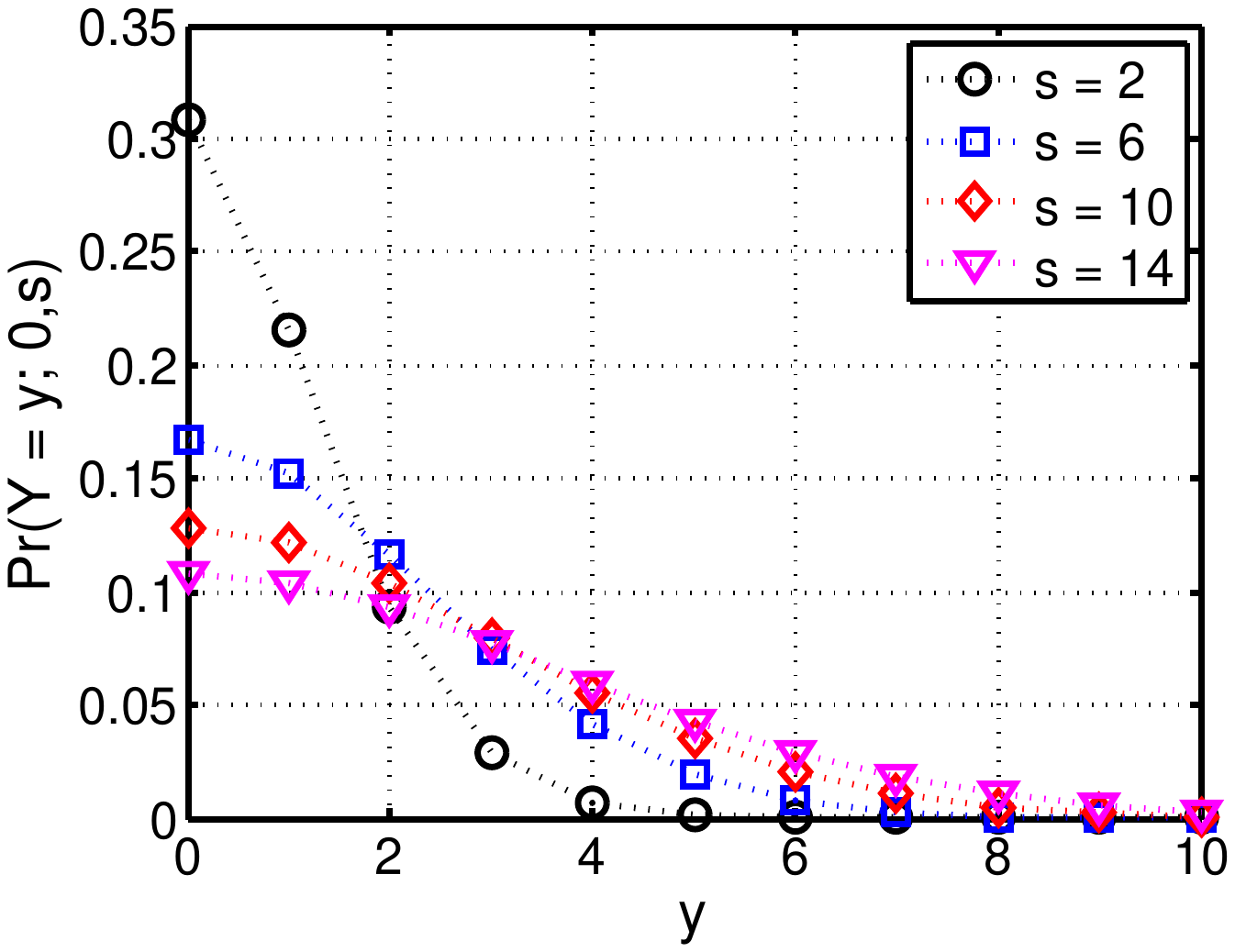}}
\subfigure[\label{fig:SkellamSkew}Skewness in terms of mean and variance]{%
\includegraphics[width=.49\textwidth]{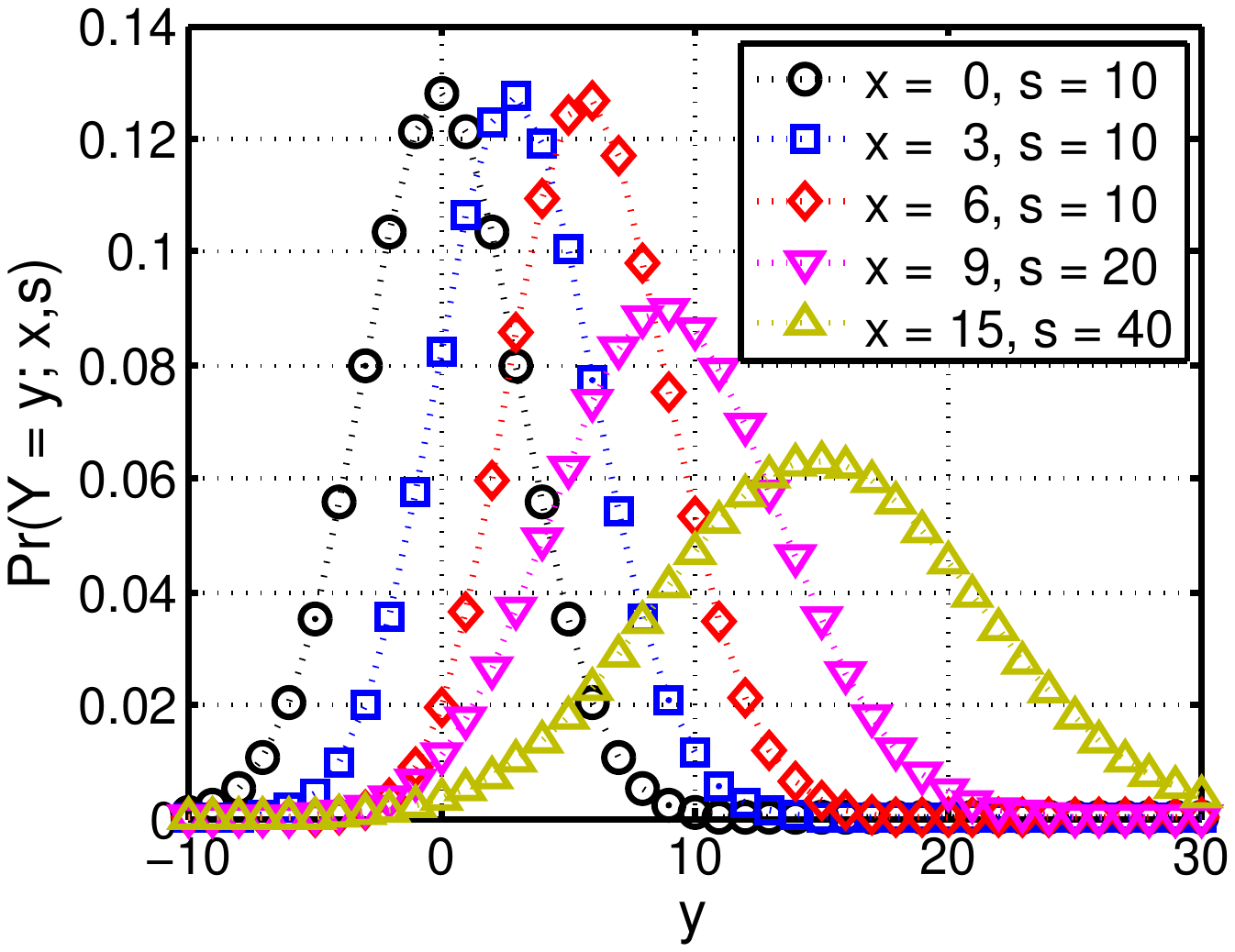}}
\caption{Illustrations of the Skellam distribution $\mathcal{S}(x,s)$ showing tail behavior and skewness.  See Definition~\ref{def:skellam} in the text for details}
\label{fig:Skellam}
\end{figure*}
with Fig.~\ref{fig:SkellamTail} detailing the tail behavior of other symmetric cases $\mathcal{S}(0,s)$; examples illustrating skewness as a general function of mean and variance are shown in Fig.~\ref{fig:SkellamSkew}.

Returning now to our context of Haar transforms, we next observe that the density of empirical coefficient $y_i$ depends only on the corresponding wavelet and scaling coefficients $x_i$ and $s_i$ (and similarly for the coarsest Haar wavelet subband).
\begin{prop}\label{prop:condInd}
Let $y_i \sim \mathcal{S}(x_i,s_i)$ according to Definition~\ref{def:skellam}, with $\bm{x} := \bm{W} \! \bm{f}$ a vector of Haar filterbank transform coefficients, and $\bm{y}$ that of the empirical coefficients.  Then
\begin{align*}
p(y_i\,;\bm{f})=p(y_i \,; s_i,x_i)\text{.}
\end{align*}
\end{prop}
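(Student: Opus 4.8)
The plan is to trace the empirical coefficient $y_i$ back to the independent Poisson counts $\{g_j\}$ and then invoke Proposition~\ref{prop:PoissonDiff} directly. First I would fix the index sets picked out by the $i$th row of $\bm{W}$, say $A_i\p := \{j : W_{ij} = 1\}$ and $A_i\m := \{j : W_{ij} = -1\}$; because $\bm{W} \in \{0,\pm 1\}^{N\times N}$, these sets are disjoint and every nonzero entry of the row has unit magnitude, so by~\eqref{eqn:ObsWav}--\eqref{eqn:ypm} we may write $y_i\p = \sum_{j \in A_i\p} g_j$ and $y_i\m = \sum_{j \in A_i\m} g_j$, whence $y_i = y_i\p - y_i\m$. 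The one structural fact doing all the work is that $A_i\p$ and $A_i\m$ are disjoint: $y_i\p$ and $y_i\m$ are therefore functions of disjoint subcollections of the mutually independent variates $\{g_j\}$, and hence are themselves independent.

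Next I would apply Poisson superposition: a sum of independent Poisson variates is Poisson with mean equal to the sum of the means, so $y_i\p \sim \mathcal{P}(x_i\p)$ and $y_i\m \sim \mathcal{P}(x_i\m)$ with $x_i\p = \sum_{j \in A_i\p} f_j$ and $x_i\m = \sum_{j \in A_i\m} f_j$ exactly as in~\eqref{eqn:pm} (an empty index set contributing a degenerate Poisson mass at $0$, consistent with $x_i\p = 0$ or $x_i\m = 0$). Since $y_i$ is then the difference of two \emph{independent} Poisson variates, Proposition~\ref{prop:PoissonDiff} applies verbatim and gives
\begin{equation*}
p(y_i\,;\bm{f}) = e^{-(x_i\p + x_i\m)}\left(\frac{x_i\p}{x_i\m}\right)^{\!\frac{y_i}{2}} I_{y_i}\!\left(2\sqrt{x_i\p \, x_i\m}\right) \text{.}
\end{equation*}
The right-hand side depends on $\bm{f}$ only through the pair $(x_i\p, x_i\m)$, which under the reparametrization $s_i = x_i\p + x_i\m$, $x_i = x_i\p - x_i\m$ of Definition~\ref{def:skellam} is the same as depending only on $(s_i, x_i)$; rewriting in these variables reproduces the Skellam form~\eqref{eq:skellamPMF} and yields the claimed identity $p(y_i\,;\bm{f}) = p(y_i\,;s_i, x_i)$.

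Finally, the parenthetical remark about the coarsest Haar wavelet subband is the special case $A_i\m = \varnothing$: there $x_i\m = 0$, the difference structure collapses, and one recovers the limiting Poisson scaling coefficient $t_i \sim \mathcal{P}(s_i)$ already recorded in~\eqref{eqn:ypm}. I do not expect a genuine obstacle here---the argument is essentially bookkeeping---and the only point that needs to be stated carefully is the disjointness of $A_i\p$ and $A_i\m$, since this is exactly what decouples $y_i\p$ from $y_i\m$. It is also worth emphasizing that the conclusion is a \emph{marginal} statement about a single coefficient $y_i$ and not a joint one: distinct rows of $\bm{W}$ generally overlap, so the coefficients $\{y_i\}$ remain mutually dependent.
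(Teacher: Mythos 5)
Your proposal is correct and follows essentially the same route as the paper's own proof: both arguments reduce the dependence of $p(y_i\,;\bm{f})$ on $\bm{f}$ to the pair $(x_i^{+},x_i^{-})$ and then invoke the bijective reparametrization $x_i^{\pm}=(s_i\pm x_i)/2$ to conclude $p(y_i\,;\bm{f})=p(y_i\,;s_i,x_i)$. The only cosmetic difference is that you explicitly rederive the independence and Poisson-superposition facts behind $y_i^{\pm}\sim\mathcal{P}(x_i^{\pm})$ (which the paper takes as already established in~\eqref{eqn:pm}--\eqref{eqn:ypm} and Proposition~\ref{prop:PoissonDiff}), whereas the paper expresses the same reparametrization through the row vectors of $\bm{W}$ acting on $\bm{W}^{-1}\bm{x}$.
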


\begin{proof}
The relation is a straightforward consequence of the choice of transform. From the definitions in~\eqref{eqn:pm},
\begin{align*}
p(y_i\,;\bm{f}) & = p(y_i\,;x_i\p,x_i\m)\\
& = p\!\left(y_i \,; \textstyle \sum_{j:W_{ij}=1}f_j,\sum_{j:W_{ij}=-1}f_j \right)\\
& = p\!\left(y_i \,; \textstyle \sum_{j:W_{ij}=1}(\bm{W}^{-1}\bm{x})_{j}, \sum_{j:W_{ij}=-1}(\bm{W}^{-1}\bm{x})_j\!\right)\!\text{.}
\end{align*}
Let $\bm{v}_i$ and $\bm{w}_i$ be row vectors from $\bm{W}\!$ such that $s_i=\bm{v}_i\bm{f}$ and $x_i=\bm{w}_i\bm{f}$, respectively.  It is easily verified that the $j$th entry of $(\bm{v}_i + \bm{w}_i)/2$ is nonzero if and only if $W_{ij}=1$, and hence
\begin{align*}
p(y_i\,;f)
& = p\!\left(y_i \,; \left(\textstyle{\frac{\bm{v}_i+\bm{w}_i}{2}}\right)\bm{W}^{-1}\bm{x}, \left(\textstyle{\frac{\bm{v}_i-\bm{w}_i}{2}}\right)\bm{W}^{-1}\bm{x} \right) \\
&=p\!\left(y_i \,; \textstyle{\frac{s_i+x_i}{2}},\textstyle{\frac{s_i-x_i}{2}} \right) =p(y_i \,; s_i,x_i)\text{.}
\end{align*}
\end{proof}

\section{Wavelet-Domain Poisson Intensity Estimation}
\label{sec:estimation}

Recall our goal of leveraging properties of Haar wavelets and filterbanks to accomplish transform-domain intensity estimation for inhomogeneous Poisson data.  To this end there are two main conclusions to be drawn from Section~\ref{sec:skellam} above:  First, Poisson variability in the data domain gives rise to Skellam variability in Haar transform domains (Definition~\ref{def:skellam}).  Second, the conditional independence structure of Haar coefficients suggests univariate Skellam estimators as a first step toward achieving satisfactory performance (Proposition~\ref{prop:condInd}).

Accordingly, we now turn our attention to deriving univariate Skellam mean estimators under both Bayes and frequentist assumptions.  We work throughout with the generic scalar quantity $Y \sim \mathcal{S}(x,s)$, where Haar scaling coefficient $s$ is given and Haar wavelet coefficient $x$ is a latent variable, assumed to be random or deterministic depending on context.  Although the scaling coefficient is not directly observed in practice, this standard wavelet estimation assumption amounts to using the empirical scaling coefficient $t_i$ of~\eqref{eqn:ObsWav} as a plug-in estimator of $s_i$ in~\eqref{eqn:wavsca}.  As Haar scaling coefficients constitute sums of Poisson variates in this context, their expected signal-to-noise ratios are likely to be high, in keeping with the arguments of Section~\ref{sec:intro}, and moreover they admit asymptotic Normality.

\subsection{Key Properties of the Skellam Likelihood Model}
\label{sec:SkellamLikelihoodProps}

We first develop some needed properties of the Skellam likelihood model;
while these follow from standard recurrence relations for Bessel functions of integral order, probabilistic derivations can prove more illuminating.  We begin with expressions for partial derivatives of the Skellam distribution.

\begin{propty}[Derivatives of the Skellam Likelihood]\label{prop:SkellamPartials}
Partial derivatives of the Skellam likelihood $p(y\,;x,s)$
admit the following finite-difference expressions:
\begin{align*}
\frac{\partial}{\partial x} p(y\,;x,s) & = \frac{1}{2}\left[p(y\!-\!1\,;x,s)-p(y\!+\!1\,;x,s)\right] \\
\frac{\partial}{\partial s} p(y\,;x,s) & = \frac{1}{2}\left[p(y\!-\!1\,;x,s)+p(y\!+\!1\,;x,s)\right] - p(y\,;x,s)
\text{.}
\end{align*}
\end{propty}

\begin{proof}
Recall from Definition~\ref{def:skellam} that a Skellam variate $Y \sim \mathcal{S}(x,s)$ comprises the difference of two Poisson variates with respective means $x\p$ and $x\m$.  Denoting by $\mathcal{F}$ the (conjugate) Fourier transform operator acting on the corresponding probability measure, its characteristic function in $\omega$ follows as
\begin{align*}
\mathcal{F}p(y\,;x,s) & = \exp\left[ x\p (e^{j\omega} - 1) + x\m (e^{-j\omega} - 1) \right] \\
& = \textstyle \exp\left[ \frac{1}{2}(s+x)  e^{j\omega} + \frac{1}{2}(s-x) e^{-j\omega} - s \right]
\text{;}
\end{align*}
and hence, invoking linearity, we may compute derivatives as:
\begin{align*}
\frac{\partial}{\partial x}p(y\,;x,s) & = \mathcal{F}^{-1}\frac{\partial}{\partial x}(\mathcal{F}p)(\omega) =\mathcal{F}^{-1}{\textstyle\frac{1}{2}}\left(e^{j\omega}\!-\!e^{-j\omega}\right) (\mathcal{F}p)
\text{,}
\end{align*}
and similarly for the partial derivative of $p(y\,;x,s)$ in $s$. 
\end{proof}

\begin{rem}
\end{rem}\vspace{-0.75\baselineskip}
Property~\ref{prop:SkellamPartials} implies that $(\partial / \partial x) p(y\,;x,s)$ is the normalized first central difference of the likelihood on its domain $y$, and that $(\partial / \partial s) p(y\,;x,s)$ is one-half the normalized second central difference.  Hence slope and curvature of the likelihood are encoded directly in the Skellam score functions.

Next, we note that for $\nu \in \mathbb{N}$ the standard Bessel identity $I_{\nu}(t) = - 2(\nu-1)/t \, I_{\nu-1}(t) +I_{\nu-2}(t)$ implies the following.

\begin{propty}[Skellam Likelihood Recursion]
\label{prop:SkellamRecursion}
The Skellam likelihood $p(y\,;x,s)$ admits the following recurrence relation in $y$ for fixed $(x,s)$:
\begin{equation*}
p(y\,;x,s) = \frac{-2(y-1)}{s-x} \, p(y-1\,;x,s) + \frac{s+x}{s-x} \, p(y-2\,;x,s)
\text{.}
\end{equation*}
\end{propty}

\begin{rem}
\end{rem}\vspace{-0.75\baselineskip}
This property lends itself to easy calculation of the Skellam likelihood, as fixed initial values may be tabulated and used to initialize the recursion, thus avoiding the evaluation of Bessel functions.

Combining Properties~\ref{prop:SkellamPartials} and~\ref{prop:SkellamRecursion}, 
we have our final result.

\begin{propty}[Skellam Differential Equation]\label{prop:deriv}
The Skellam likelihood $p(y\,;x,s)$
satisfies a linear, first-order hyperbolic partial differential equation in $(x,s)$, for fixed $y$, as follows:
\begin{equation}
(y-x) \, p(y\,;x,s) = s\, \frac{\partial}{\partial x} p(y\,;x,s) + x\, \frac{\partial}{\partial s} p(y\,;x,s) \text{.} \label{eqn:SkellamPDE}
\end{equation}
\end{propty}

\subsection{Prior Models and Posterior Inference via Shrinkage}
\label{sec:BayesEst}

Having developed needed properties of the Skellam likelihood $p(y\,;x,s)$ above, and with $s$ assumed directly observed, we now consider the setting in which each underlying transform coefficient $x: |x| \leq s$ is modeled as a random variable.  While determining the most appropriate choice of prior distribution for different problem domains remains an open area of research, with examples ranging from generalized Gaussian distributions through discrete and continuous scale mixtures, we make no attempt here to introduce new insights on prior elicitation.  Rather, we focus on optimal estimation for general classes of prior distributions having compact support.

The problem being univariate, exact inference is realizable through numerical methods; however, the requisite determination of prior parameters, possibly from data via empirical Bayes, renders this approach infeasible in practice, as posterior values cannot be easily tabulated in advance.  To this end, the main result of this section is an approximate Skellam conditional mean estimator with bounded error, obtained as a closed-form shrinkage rule.
\begin{thm}[Skellam Shrinkage]\label{thm:SkellamPost}
Consider a Skellam random variable $Y \sim \mathcal{S}(X,s)$, with $s$ fixed but $X$ a random variable that admits a density with respect to Lebesgue measure on $[-s,s]$.  Define the Bayes point estimator
\begin{equation}\label{eqn:SkellamPost}
\widehat{X} := Y - s \textstyle
\operatorname{\mathbb{E}}\left(\frac{\partial}{\partial X} \ln p(Y\,\vert\,X;s) \,\big\vert\,Y;s\right) \text{,}
\end{equation}
whence a projection of the score function in $x$ via conditional expectation.  Its squared approximation error, relative to the conditional expectation $\widehat{X}_\text{MMSE}:=\operatorname{\mathbb{E}}(X\,\vert\,Y;s)$, then satisfies
\begin{equation*}
(\widehat{X}_\text{MMSE} - \widehat{X})^2 \leq \operatorname{\mathbb{E}}\!\left(X^2 \,\vert\,Y;s\right)
\, \textstyle \operatorname{\mathbb{E}}\!\left(\left[\frac{\partial}{\partial s} \ln p(Y\,\vert\,X;s)\right]^2 \big\vert\,Y;s\right) \! \text{.}
\end{equation*}
\end{thm}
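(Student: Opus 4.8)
The plan is to promote the deterministic identity of Property~\ref{prop:deriv} to a stochastic one by conditioning on $Y$, thereby writing the posterior mean $\widehat{X}_{\text{MMSE}}$ as $\widehat{X}$ plus a single explicit remainder, and then to bound that remainder by the conditional Cauchy--Schwarz inequality. First I would divide the Skellam differential equation~\eqref{eqn:SkellamPDE} through by the likelihood $p(y\,;x,s)$---legitimate since the Skellam mass function is strictly positive on $\mathbb{Z}$ whenever $-s<x<s$---to obtain the score-function relation
\[
y - x \;=\; s\,\frac{\partial}{\partial x}\ln p(y\,;x,s) \;+\; x\,\frac{\partial}{\partial s}\ln p(y\,;x,s),
\]
or equivalently $x = y - s\,\frac{\partial}{\partial x}\ln p(y\,;x,s) - x\,\frac{\partial}{\partial s}\ln p(y\,;x,s)$, valid pointwise in $(x,y)$.

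Next I would read $x$ as a realization of the latent variable $X$ and take the expectation of both sides of this last identity under the posterior law of $X$ given $Y$. Because $X$ admits a Lebesgue density on $[-s,s]$ the event $\{X=\pm s\}$ is null, so both score functions are posterior-almost-surely finite; and since $Y$ is held fixed by the conditioning this produces
\[
\widehat{X}_{\text{MMSE}} \;=\; Y \;-\; s\,\operatorname{\mathbb{E}}\!\left(\tfrac{\partial}{\partial X}\ln p(Y\,\vert\,X;s)\,\big\vert\,Y;s\right) \;-\; \operatorname{\mathbb{E}}\!\left(X\,\tfrac{\partial}{\partial s}\ln p(Y\,\vert\,X;s)\,\big\vert\,Y;s\right).
\]
The first two terms on the right are exactly the estimator $\widehat{X}$ of~\eqref{eqn:SkellamPost}, so we arrive at the clean decomposition $\widehat{X}_{\text{MMSE}} - \widehat{X} = -\operatorname{\mathbb{E}}\!\left(X\,\tfrac{\partial}{\partial s}\ln p(Y\,\vert\,X;s)\,\big\vert\,Y;s\right)$.

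Finally I would square this identity and apply the Cauchy--Schwarz inequality to the conditional expectation, viewed as the inner product $\langle U,V\rangle := \operatorname{\mathbb{E}}(UV\,\vert\,Y;s)$ on the posterior space, with $U = X$ and $V = \tfrac{\partial}{\partial s}\ln p(Y\,\vert\,X;s)$:
\[
(\widehat{X}_{\text{MMSE}} - \widehat{X})^2 \;\le\; \operatorname{\mathbb{E}}\!\left(X^2\,\vert\,Y;s\right)\,\operatorname{\mathbb{E}}\!\left(\left[\tfrac{\partial}{\partial s}\ln p(Y\,\vert\,X;s)\right]^2\,\Big\vert\,Y;s\right),
\]
which is the claimed bound.

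The algebra here is routine, so the real work is integrability bookkeeping: the rearranged score identity is an exact pointwise relation, so taking posterior expectations term by term is valid as soon as each summand on the right is posterior-integrable, and the final bound is non-vacuous only if the two conditional second moments it involves are finite. Compact support of the prior is the key simplification: $|X|\le s$ gives $\operatorname{\mathbb{E}}(X^2\,\vert\,Y;s)\le s^2<\infty$ immediately, while the finite-difference expressions for the score functions furnished by Property~\ref{prop:SkellamPartials}, together with strict positivity of the Skellam likelihood at interior $x$, bound the $s$-score and hence control $\operatorname{\mathbb{E}}([\tfrac{\partial}{\partial s}\ln p(Y\,\vert\,X;s)]^2\,\vert\,Y;s)$. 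As a closing remark, a Bayes-rule manipulation (again using Property~\ref{prop:SkellamPartials}) shows that $\operatorname{\mathbb{E}}(\tfrac{\partial}{\partial X}\ln p(Y\,\vert\,X;s)\,\vert\,Y;s)$ collapses to a ratio of marginal masses of $Y$, which is what makes $\widehat{X}$ a genuinely closed-form shrinkage rule rather than merely a relabelling of the posterior mean.
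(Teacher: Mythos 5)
Your proof is correct and follows essentially the same route as the paper: dividing the Skellam differential equation \eqref{eqn:SkellamPDE} by the likelihood and taking posterior expectations is exactly the ``Bayes' rule applied to the differential equation'' step in the paper's proof, yielding $\widehat{X}_\text{MMSE}-\widehat{X} = -\operatorname{\mathbb{E}}\bigl(X\,\tfrac{\partial}{\partial s}\ln p(Y\,\vert\,X;s)\,\big\vert\,Y;s\bigr)$, after which conditional Cauchy--Schwarz gives the stated bound. Your added bookkeeping (endpoint null sets, compact support giving $\operatorname{\mathbb{E}}(X^2\,\vert\,Y;s)\le s^2$, and the finite-difference control of the $s$-score via Property~\ref{prop:SkellamPartials}) is a faithful elaboration of what the paper leaves implicit.
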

\begin{proof}
Bayes' rule applied to the differential equation of~\eqref{eqn:SkellamPDE} yields the necessary conditional expectations, after which Cauchy-Schwarz serves to bound its latter term. 
\end{proof}

While we cannot control the second moment of $X$ conditioned on $Y$ in the bound above, its latter term admits by Property~\ref{prop:SkellamPartials} the equivalence
\begin{equation*}
\textstyle \operatorname{\mathbb{E}}\!\left(\left[\frac{\partial}{\partial s} \ln p(Y\,\vert\,X;s)\right]^2 \big\vert\,Y;s\right) = \operatorname{\mathbb{E}}\!\left(\left[ \frac{(\delta_2^Y \! p)(Y\,\vert\,X;s)}{2p(Y\,\vert\,X;s)} \right]^2 \Big\vert\,Y;s \right)
\text{,}
\end{equation*}
where $\delta_2^Y(\cdot)$ denotes the normalized second central difference in $Y$, analogous to a second derivative.  This term therefore goes as the square of the normalized local curvature in the likelihood at $Y=y$, averaged over the posterior distribution of $X$; it will be small on portions of the domain over which the likelihood remains approximately linear for sets of $X$ having high posterior probability.

Theorem~\ref{thm:SkellamPost} thus provides a means of obtaining Bayesian shrinkage rules under different choices of prior distribution $p(X;s)$, via evaluation of the expectation of~\eqref{eqn:SkellamPost} as
\begin{equation}
\label{eqn:deriv}
\textstyle
p(x\,\vert\,y,s)\big|_{-s}^s -  \operatorname{\mathbb{E}}\left(\frac{\partial}{\partial X} \ln p(X;s) \,\big\vert\,Y;s\right)
\text{.}
\end{equation}
While the above formulation is amenable to further approximation via Taylor expansion (akin to Laplace approximation), we focus here on a direct evaluation of $\operatorname{\mathbb{E}}\left(\frac{\partial}{\partial X} \ln p(X;s) \,\big\vert\,Y;s\right)$.

Discounting the former term of~\eqref{eqn:deriv}, which simply measures the difference in posterior tail decay at $x = \pm s$ and goes to zero with increasing $s$, the derivative on $[-s,s]$ is easily computed for the so-called generalized Gaussian distribution for $p>0$, with location parameter $\mu$ and scale parameter $\sigma_x$:
\begin{align*}
\textstyle
p(x\,;\mu,\sigma_x^2) & \!=\! \frac{1}{2\sigma_x \zeta(p)^{1/p} \, \Gamma(1\!+\!1/p)}
\exp \! \left[- { \textstyle \frac{1}{\zeta(p)} }
\left(\frac{|x-\mu|}{\sigma_x}\right)^{\!p} \, \right] \!
\text{,}
\end{align*}
with $\Gamma(\cdot)$ the Gamma function and $\zeta(p) \!=\! [\Gamma(1/p) / \Gamma(3/p) ]^{p/2}$.

This distribution being unimodal and symmetric about its mean, we obtain for $\mu=0$ the expression
$$
{\textstyle
\operatorname{\mathbb{E}}\left(\frac{\partial}{\partial X} \ln p(X;s) \,\big\vert\,Y;s\right) } =
- \frac{p\,\sigma_x^{-p}}{\zeta(p)}
{\textstyle
\operatorname{\mathbb{E}}\left(\operatorname{sgn}(X) |X|^{p-1} \,\big\vert\,Y;s\right) }
\text{,}
$$
from which the Gaussian ($p=2$) and Laplacian ($p=1$) cases admit straightforward evaluation.

\begin{prop}[Truncated Normal and Laplace Priors]\label{prop:approx_bound}
Let $g(x)$ denote a generalized Gaussian distribution with exponent $p>0$ having mean zero and variance $\sigma_x^2$, 
and set  $p(x\,;s) = g(x)\mathbb{I}_{[-s,s]}(x)$.  For $Y \sim \mathcal{S}(X,s)$ we then have:

If $p=2$ so that $p(x\,;s) \propto e^{-x^2/(2\sigma_x^2)}\mathbb{I}_{[-s,s]}(x)$, then
\begin{equation}\label{eqn:GaussImplicitVar}
\operatorname{Var}X = 2\sigma_x^2 \frac{\gamma(3/2, s^2/2\sigma_x^2)}{\gamma(1/2, s^2/2\sigma_x^2)}
= \sigma_x^2 \!-\! \frac{\sqrt{2}\,s\sigma_x e^{-s^2/2\sigma_x^2}} {\sqrt{\pi} \operatorname{erf} ( s/ \sqrt{2}\sigma_x)}
\text{,}
\end{equation}
with $\gamma(\cdot,\cdot)$ the lower incomplete Gamma function, and
\begin{equation*}
\textstyle
\operatorname{\mathbb{E}}\left(\frac{\partial}{\partial X} \ln p(X;s) \,\big\vert\,Y;s\right) =
- \sigma_x^{-2} \operatorname{\mathbb{E}}(X\,\vert\,Y;s)
\text{.}
\end{equation*}

If $p=1$ so that $p(x\,;s) \propto e^{-|x|/(\sigma_x/\sqrt{2})}\mathbb{I}_{[-s,s]}(x)$, then
\begin{equation}\label{eqn:LaplaceImplicitVar}
\operatorname{Var}X = \frac{\sigma_x^2}{2} \frac{\gamma(3, \sqrt{2}s/\sigma_x)}{\gamma(1, \sqrt{2}s/\sigma_x)}
=  \sigma_x^2 - \frac{s(s+\sqrt{2}\sigma_x) e^{-\sqrt{2}s/\sigma_x}}{1 - e^{-\sqrt{2}s/\sigma_x}}
\text{;}
\end{equation}
\begin{equation*}
\textstyle
\operatorname{\mathbb{E}}\left(\frac{\partial}{\partial X} \ln p(X;s) \,\big\vert\,Y;s\right) =
- \sqrt{2}\sigma_x^{-1} \operatorname{\mathbb{E}}(\operatorname{sgn}(X)\,\big\vert\,Y;s)
\text{.}
\end{equation*}
\end{prop}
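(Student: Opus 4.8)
The plan is to handle the cases $p=2$ and $p=1$ in parallel, in each case first identifying the untruncated density $g$ explicitly, then obtaining the truncated variance by reduction to incomplete Gamma functions, and finally reading the conditional score expectation directly off the log-density. Note at the outset that the two variance claims concern only the (truncated) prior and make no use of the Skellam likelihood, whereas $Y$ enters the two score claims solely through the conditioning in $\operatorname{\mathbb{E}}(\cdot\,|\,Y;s)$.

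First I would specialize the generalized Gaussian parametrization quoted just before the Proposition. Evaluating $\zeta(p) = [\Gamma(1/p)/\Gamma(3/p)]^{p/2}$ gives $\zeta(2) = 2$ and $\zeta(1) = 1/\sqrt{2}$, so that $g$ is the Gaussian density $\mathcal{N}(0,\sigma_x^2)$ when $p=2$ and the Laplace density with scale $\sigma_x/\sqrt{2}$ when $p=1$; in either case the untruncated variance is $\sigma_x^2$, consistent with the scaling convention, and $p(x\,;s)$ is then this density restricted and renormalized to $[-s,s]$, exactly as in the two displayed cases.

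Next, for the variance I would use symmetry to record $\operatorname{\mathbb{E}}X = 0$ and $\operatorname{Var}X = \operatorname{\mathbb{E}}X^2 = \int_{-s}^s x^2 p(x\,;s)\,dx$, whose numerator and denominator are, again by symmetry, $2\int_0^s x^2 g(x)\,dx$ and $2\int_0^s g(x)\,dx$. The substitution $u = x^2/2\sigma_x^2$ (for $p=2$) or $u = \sqrt{2}x/\sigma_x$ (for $p=1$) turns each integral into a lower incomplete Gamma function, yielding $\operatorname{Var}X = 2\sigma_x^2\,\gamma(3/2,s^2/2\sigma_x^2)/\gamma(1/2,s^2/2\sigma_x^2)$ and $\operatorname{Var}X = \tfrac{1}{2}\sigma_x^2\,\gamma(3,\sqrt{2}s/\sigma_x)/\gamma(1,\sqrt{2}s/\sigma_x)$ respectively, i.e., the first equality in each of \eqref{eqn:GaussImplicitVar} and \eqref{eqn:LaplaceImplicitVar}. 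The second, elementary form then follows from the Gamma recurrence $\gamma(a+1,z) = a\gamma(a,z) - z^a e^{-z}$ (applied once for $\gamma(3/2,\cdot)$, twice for $\gamma(3,\cdot)$) together with $\gamma(1/2,z) = \sqrt{\pi}\operatorname{erf}(\sqrt{z})$ and $\gamma(1,z) = 1 - e^{-z}$, after collecting terms over a common denominator.

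Finally, for the conditional score I would differentiate $\ln p(x\,;s)$, which on the open interval $(-s,s)$ equals an additive constant minus $x^2/2\sigma_x^2$ (for $p=2$) or minus $\sqrt{2}|x|/\sigma_x$ (for $p=1$); hence $\frac{\partial}{\partial x}\ln p(x\,;s)$ is $-\sigma_x^{-2}x$ or $-\sqrt{2}\sigma_x^{-1}\operatorname{sgn}(x)$, which are precisely the $p=2$ and $p=1$ instances of the general identity for $\operatorname{\mathbb{E}}(\frac{\partial}{\partial X}\ln p(X;s)\,|\,Y;s)$ displayed above, so that taking $\operatorname{\mathbb{E}}(\cdot\,|\,Y;s)$ gives the stated expressions. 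The only place where care is needed is bookkeeping: one must track the $\zeta(p)$-dependent scale factors through the change of variables so that the incomplete-Gamma arguments and the factors of $\sqrt{2}$ emerge exactly as claimed; beyond this there is no analytic obstacle.
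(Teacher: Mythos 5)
Your proposal is correct and follows essentially the same route the paper takes implicitly: specializing $\zeta(p)$ so that the $p=2$ and $p=1$ cases reduce the general score formula $-\frac{p\,\sigma_x^{-p}}{\zeta(p)}\operatorname{\mathbb{E}}\left(\operatorname{sgn}(X)|X|^{p-1}\,\vert\,Y;s\right)$ to the stated expressions, and evaluating the truncated second moments via lower incomplete Gamma functions and the recurrence $\gamma(a+1,z)=a\gamma(a,z)-z^ae^{-z}$. The paper offers no more detailed argument than this, so nothing is missing from your version.
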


Combining Proposition~\ref{prop:approx_bound} with Theorem~\ref{thm:SkellamPost} yields approximate posterior mean estimators under truncated Gaussian and Laplacian priors.  The Gaussian case recovers the shrinkage rule
\begin{equation}
\label{eqn:estimator-Gaussian}
\widehat{X} = \frac{\sigma_x^2}{s+\sigma_x^2} \, Y \text{,}
\end{equation}
the optimal linear estimator under a second-moment Normal approximation to the Skellam likelihood,  with mean zero and variance $s$.  The heavier-tailed Laplacian case yields an implicit shrinkage rule illustrated in Fig.~\ref{fig:posterior},
\begin{figure}[!t]
\begin{center}
\includegraphics[width=0.525\textwidth]{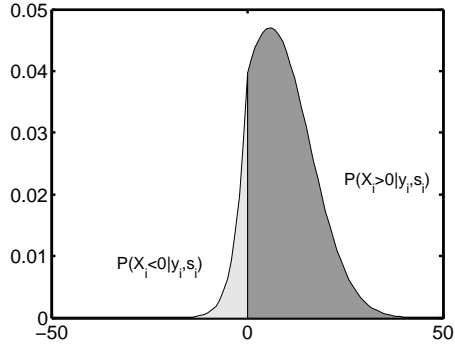}
\caption{\label{fig:posterior} Illustration of the shrinkage implicit in~\eqref{eqn:estimator-Laplace} as a function of the posterior distribution $p(y\,\vert\,x;s)$ in the case of a Skellam likelihood, showing the contribution of posterior mass to shrinkage toward and away from zero}
\end{center}
\end{figure}
whose asymptotic behavior in turn enables a simple soft-thresholding rule to be fitted:
\begin{align}
\label{eqn:estimator-Laplace} 
\widehat{X} & = Y - \frac{\sqrt{2}s}{\sigma_x} \left[ \operatorname{Pr}(X > 0\,\vert\,Y;s) - \operatorname{Pr}(X < 0\,\vert\,Y;s) \right]
\\ & \approxeq \operatorname{sgn}(Y) \, \max \! \left( |Y| - \sqrt{2}s / \sigma_x, 0 \right)
\label{eqn:SBT}
\text{,}
\end{align}
The soft-thresholding estimator of~\eqref{eqn:SBT} can in turn be adapted to yield a piecewise-linear estimator whose slope matches that of~\eqref{eqn:estimator-Laplace} at the origin.  To accomplish this, note that for any prior distribution with even symmetry,~\eqref{eq:skellamPMF} of Definition~\ref{def:skellam} implies odd symmetry of the posterior expectation functional; i.e., $\operatorname{\mathbb{E}}(X\,\vert\,Y=y\,;s)  = - \operatorname{\mathbb{E}}(X\,\vert\,Y=-y\,;s) \text{.}$  Therefore the slope of any shrinkage estimator at the origin may be computed as
\begin{equation*}
\frac{1}{2}\left[ \operatorname{\mathbb{E}}(X\,\vert\,Y\!=\!1\,;s) - \operatorname{\mathbb{E}}(X\,\vert\,Y\!=\!-1\,;s) \right]  = \operatorname{\mathbb{E}}(X\,\vert\,Y\!=\!1\,;s) \text{.}
\end{equation*}
The slope term $\operatorname{\mathbb{E}}(X\,\vert\,Y\!=\!1\,;s)$ may in turn be pre-computed to arbitrary accuracy using numerical methods, and indexed as a function of $s$ and prior variance $\sigma_x^2$, yielding the following piecewise-linear shrinkage estimator:
\begin{equation}
\label{eqn:SBL}
\widehat{X} \!=\! \operatorname{sgn}(Y) \max \! \left( |Y| \!-\! \sqrt{2}s / \sigma_x, \, \operatorname{\mathbb{E}}(X\,\vert\,Y\!=\!1\,;s)\, |Y| \right) \!
\text{.}
\end{equation}
Figure~\ref{fig:bayes}
\begin{figure}
\centering
\includegraphics[width=.525\textwidth]{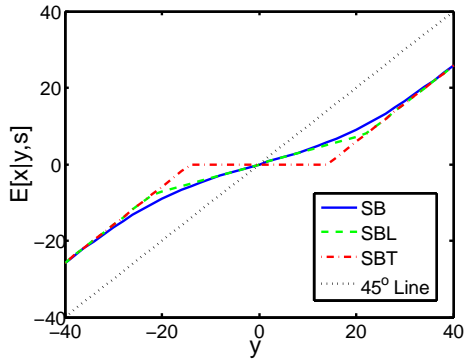}
\caption{\label{fig:bayes}Bayesian shrinkage rules corresponding to a Laplacian prior and Skellam likelihood, with dotted 45$^\circ$ line shown for reference: Skellam Bayes (SB) MMSE shrinkage rule, computed numerically; soft-thresholding (SBT) approximation of~\eqref{eqn:SBT}; and piecewise-linear (SBL) approximation of~\eqref{eqn:SBL}}
\end{figure}
in turn compares the exact posterior mean shrinkage rule, corresponding to $\operatorname{\mathbb{E}}(X\,\vert\,Y;s)$ and computed numerically, with the soft-thresholding estimator of~\eqref{eqn:SBT} and the piecewise-linear estimator of~\eqref{eqn:SBL}.  The ideas above can be straightforwardly extended to the multivariate case~\cite{ref:HirakawaWolfe09b}, owing to conditional independence properties of the Skellam likelihood; derivatives may also be computed for the case of mixture priors, though no efficient solution is yet known to compute the mixture weights.

\subsection{Parameter and Risk Estimation for Skellam Shrinkage}
\label{sec:modified_SURE}

Having derived Bayes estimators for the class of unimodal, zero-mean, symmetric priors considered above, we now turn to parameter and risk estimation for Skellam shrinkage.  With only a single observation of each Haar coefficient in this heteroscedastic setting, maximum-likelihood methods will simply return the identity as a shrinkage rule.  However, by borrowing strength across multiple coefficient observations we may improve upon the risk properties of this approach; as we now detail, this is equally attainable in a  frequentist or Bayes setting.  Here we consider coefficient aggregation within a given scale, with notation $\sum_i (\cdot)_i$ below indicating summation over location parameter $i$ within a single Haar subband.

The main result of this section is the following theorem, which yields a procedure for unbiased $\ell^2$ risk estimation in the context of soft thresholding and other shrinkage operators.
\begin{thm}[Unbiased Risk Estimation]\label{thm:skellam}
Let $y_i \sim \mathcal{S}(x_i,s_i)$ 
and $t_i=y_i\p + y_i\m$ according to~\eqref{eqn:ObsWav}, with $x_i,s_i$ unknown.  Fix a vector-valued estimator $\widehat{\bm{X}}(\bm{Y}\!,\bm{T})=\bm{Y}+\bm{\theta}(\bm{Y}\!,\bm{T})$, where $\bm{\theta}:\mathbb{Z}^N\times\mathbb{Z}_+^N\to\mathbb{R}^N$, and let $\bm{1}$ denote the vector of all ones.   Then the $\ell^2$ risk of $\widehat{\bm{X}}(\bm{Y}\!,\bm{T})$ may be expressed
as follows:
\begin{multline}
\label{eqn:modified_SURE1}
\operatorname{\mathbb{E}} \| \widehat{\bm{X}}(\bm{Y}\!,\bm{T}) - \bm{x} \|_2^2
= \operatorname{\mathbb{E}}\!\Big[ \| \bm{\theta}(\bm{Y}\!,\bm{T}) \|_2^2 + \| \bm{T} \|_1 +
2 \langle \bm{Y} \!, \bm{\theta}(\bm{Y}\!,\bm{T}) \rangle \\
- \! \langle \bm{T}\!\!+\!\bm{Y}\!,\bm{\theta}(\bm{Y}\!\!-\!\bm{1},\bm{T}\!-\!\bm{1}) \rangle \!+\! \langle \bm{T}\!-\!\bm{Y} \!, \bm{\theta}(\bm{Y}\!\!+\!\bm{1},\bm{T}\!-\!\bm{1}) \rangle \Big] \text{,}
\end{multline}
with
$\| \bm{\theta}(\bm{y},\bm{t}) \|_2^2 \!+\! \| \bm{t} \|_1 \!+\!
2 \langle \bm{y} \!, \bm{\theta}(\bm{y},\bm{t}) \rangle - \!\langle \bm{t}\!\!+\!\bm{y}\!,\bm{\theta}(\bm{y}\!\!-\!\bm{1},\bm{t}\!-\!\bm{1}) \rangle \!+\! \langle \bm{t}\!-\!\bm{y} \!, \bm{\theta}(\bm{y}\!\!+\!\bm{1},\bm{t}\!-\!\bm{1}) \rangle$
an unbiased estimate thereof.
\end{thm}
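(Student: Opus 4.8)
The plan is to expand the quadratic loss, dispatch the two ``easy'' pieces directly, and reduce the theorem to a single Poisson integration-by-parts identity obtained by passing from the Skellam pair $(\bm Y,\bm T)$ back to the underlying independent Poisson counts $(\bm y\p,\bm y\m)$.

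First I would write $\widehat{\bm X}(\bm Y,\bm T)-\bm x=(\bm Y-\bm x)+\bm\theta(\bm Y,\bm T)$ and expand
\begin{equation*}
\operatorname{\mathbb{E}}\|\widehat{\bm X}-\bm x\|_2^2=\operatorname{\mathbb{E}}\|\bm Y-\bm x\|_2^2+2\operatorname{\mathbb{E}}\langle\bm Y-\bm x,\bm\theta(\bm Y,\bm T)\rangle+\operatorname{\mathbb{E}}\|\bm\theta(\bm Y,\bm T)\|_2^2\text{.}
\end{equation*}
By Definition~\ref{def:skellam}, $\operatorname{\mathbb{E}} Y_i=x_i$ and $\operatorname{Var}Y_i=s_i$, so $\operatorname{\mathbb{E}}\|\bm Y-\bm x\|_2^2=\sum_i s_i$; and since $\operatorname{\mathbb{E}} t_i=x_i\p+x_i\m=s_i$ with $t_i\ge0$, this equals $\operatorname{\mathbb{E}}\|\bm T\|_1$. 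The term $\operatorname{\mathbb{E}}\|\bm\theta(\bm Y,\bm T)\|_2^2$ and the ``$2\langle\bm Y,\bm\theta(\bm Y,\bm T)\rangle$'' part of the cross term already appear in~\eqref{eqn:modified_SURE1}, so all that remains is to verify
\begin{equation*}
-2\operatorname{\mathbb{E}}\langle\bm x,\bm\theta(\bm Y,\bm T)\rangle=\operatorname{\mathbb{E}}\!\left[-\langle\bm T\!+\!\bm Y,\bm\theta(\bm Y\!-\!\bm 1,\bm T\!-\!\bm 1)\rangle+\langle\bm T\!-\!\bm Y,\bm\theta(\bm Y\!+\!\bm 1,\bm T\!-\!\bm 1)\rangle\right]\text{.}
\end{equation*}

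For this I would use the representation of~\eqref{eqn:ObsWav}--\eqref{eqn:ypm}: $y_i\p\sim\mathcal P(x_i\p)$ and $y_i\m\sim\mathcal P(x_i\m)$ are independent, $y_i=y_i\p-y_i\m$, $t_i=y_i\p+y_i\m$, and $(y_i\p,y_i\m)\mapsto(y_i,t_i)$ is a bijection onto its range, so $\bm\theta$ may equally be regarded as a function of the Poisson counts. The one nontrivial ingredient is the elementary identity $\operatorname{\mathbb{E}}[\lambda\,h(N)]=\operatorname{\mathbb{E}}[N\,h(N-1)]$ for $N\sim\mathcal P(\lambda)$, valid whenever both sides are finite (the $N=0$ term invoking no value of $h$ at $-1$). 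Applying it coordinatewise---once in $y_i\p$ with $y_i\m$ held fixed, once in $y_i\m$ with $y_i\p$ held fixed, which is legitimate by independence and Fubini under the implicit finiteness hypotheses---gives $\operatorname{\mathbb{E}}[x_i\p\,\theta_i(\bm Y,\bm T)]=\operatorname{\mathbb{E}}[y_i\p\,\theta_i(\bm Y\!-\!\bm 1,\bm T\!-\!\bm 1)]$ and $\operatorname{\mathbb{E}}[x_i\m\,\theta_i(\bm Y,\bm T)]=\operatorname{\mathbb{E}}[y_i\m\,\theta_i(\bm Y\!+\!\bm 1,\bm T\!-\!\bm 1)]$, since a unit decrement of $y_i\p$ sends $(y_i,t_i)\mapsto(y_i-1,t_i-1)$ while a unit decrement of $y_i\m$ sends $(y_i,t_i)\mapsto(y_i+1,t_i-1)$ (the shift acting in the $i$th coordinate, which coincides with the displayed all-ones shift for coordinatewise rules such as soft thresholding). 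Subtracting these, substituting $y_i\p=(t_i+y_i)/2$ and $y_i\m=(t_i-y_i)/2$, and summing over $i$ produces precisely the required identity; the boundary case $t_i=0$ is harmless because then $t_i\pm y_i=0$. Assembling the three contributions yields~\eqref{eqn:modified_SURE1}, and since the bracketed expression depends on the data $(\bm y,\bm t)$ alone, its value at the observation is an unbiased estimate by construction.

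I expect the main obstacle to be the bookkeeping in this change of variables: keeping the two directions ($y_i\p$ versus $y_i\m$) straight, correctly translating a unit Poisson decrement into the simultaneous shift of the pair $(y_i,t_i)$, and checking that no value of $\bm\theta$ outside its stated domain is ever genuinely used (the offending terms carrying a zero coefficient). A secondary point is justifying the interchange of summation with expectation and the two conditional applications of the Poisson identity, which requires only finiteness of the risk---equivalently of $\operatorname{\mathbb{E}}\|\bm\theta(\bm Y,\bm T)\|_2^2$ and the attendant cross moments---together with the independence of $y_i\p$ and $y_i\m$.
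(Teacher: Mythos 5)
Your proposal is correct and follows essentially the same route as the paper: the same quadratic expansion, the identification $\operatorname{\mathbb{E}}\|\bm{Y}-\bm{x}\|_2^2=\sum_i s_i=\operatorname{\mathbb{E}}\|\bm{T}\|_1$, and the reduction of the cross term to the Poisson identity $\operatorname{\mathbb{E}}[\lambda h(N)]=\operatorname{\mathbb{E}}[N h(N-1)]$ applied separately to $y_i\p$ and $y_i\m$ with the substitution $y_i^{\pm}=(t_i\pm y_i)/2$ — which is exactly the discrete exponential-family identity of Hudson/Ghosh that the paper invokes by citation. Your explicit remark that the all-ones shift coincides with the $i$th-coordinate shift only for coordinatewise rules is a useful clarification of a point the paper leaves implicit, but it does not change the argument.
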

\begin{proof}
The risk $\operatorname{\mathbb{E}} \| \widehat{\bm{X}}(\bm{Y}\!,\bm{T}) - \bm{x} \|_2^2$ may be expanded as
\begin{equation}
\label{eq:SkelRiskExpansion}
\operatorname{\mathbb{E}} \| \bm{\theta}(\bm{Y}\!,\bm{T}) \|_2^2 + \operatorname{\mathbb{E}} \|\bm{Y}-\bm{x}\|_2^2 + 2 \operatorname{\mathbb{E}} \langle \bm{Y}\!\!-\!\bm{x},\bm{\theta}(\bm{Y}\!,\bm{T}) \rangle \text{,}
\end{equation}
with $\operatorname{\mathbb{E}} \|\bm{Y}-\bm{x}\|_2^2 = \sum_i \operatorname{Var}Y_i = \sum_i s_i = \operatorname{\mathbb{E}} \| \bm{T} \|_1$.  To evaluate the final term in~\eqref{eq:SkelRiskExpansion} above, note first that
$ \bm{Y} - \bm{x} = (\bm{Y\p} \!-\! \bm{Y\m}) - (\bm{x\p} \!-\! \bm{x\m})$ according to~\eqref{eqn:wavsca} and~\eqref{eqn:ObsWav}, and furthermore that $\bm{Y^{\pm}} = (\bm{T} \pm \bm{Y})/2 $.  By conditioning on $\bm{Y\p}$ or $\bm{Y\m}$ we in turn obtain Poisson variates, and thus general results for discrete exponential families~\cite{ref:Hudson_1978, ref:Ghosh_1983} apply, yielding the final relations needed to complete the proof of Theorem~\ref{thm:skellam}:
\begin{equation*}
\operatorname{\mathbb{E}} \langle \bm{Y^{\pm}}-\bm{x^{\pm}} , \bm{\theta}(\bm{Y}\!,\bm{T}) \rangle =
\operatorname{\mathbb{E}} \langle \bm{Y^{\pm}} , \bm{\theta}(\bm{Y}\!,\bm{T}) -
\bm{\theta}(\bm{Y} \mp \bm{1},\bm{T}-\bm{1}) \rangle \text{.}
\end{equation*}
\end{proof}

Parameters of any chosen estimator form $\widehat{\bm{X}}(\bm{Y}\!,\bm{T})$ may thus be optimized by minimizing the unbiased risk estimate of Theorem~\ref{thm:skellam} with respect to observed data vectors $\bm{y}$ and $\bm{t}$. As an important special case, we obtain the following corollary.
\begin{cor}[SkellamShrink]\label{cor:skellamshrink}
The optimal threshold $\tau$ for soft thresholding as $\hat{X}_i({Y}_i; \tau) := \operatorname{sgn}({Y}_i)\max(|{Y}_i|-\tau,0)$
is obtained by minimizing
\begin{equation}\label{eqn:skellamshrink}
\sum_i \operatorname{sgn}(|y_i|-\tau) t_i
\,+ \sum_i \min(y_i^2,\tau^2)-\tau \#\{i:|y_i|=\tau\}
\text{.}
\end{equation}
\end{cor}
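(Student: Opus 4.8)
The plan is to specialize the unbiased risk estimate of Theorem~\ref{thm:skellam} to the soft-thresholding shrinkage rule, whose coordinate functions $\theta(y_i;\tau)$ and their relevant shifts are piecewise-linear with a single kink at $|y_i| = \tau$. First I would recall the explicit form of $\theta$ from~\eqref{eqn:st_theta}: namely $\theta(y_i;\tau) = -\operatorname{sgn}(y_i)\tau$ when $|y_i| \geq \tau$ and $\theta(y_i;\tau) = -y_i$ when $|y_i| < \tau$. I would then substitute this into each of the five terms appearing under the expectation in~\eqref{eqn:modified_SURE1}, working coordinatewise since the estimator acts componentwise and the norms and inner products decompose as sums over $i$.

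The term $\|\bm{\theta}(\bm{y},\bm{t})\|_2^2$ contributes $\sum_i \min(y_i^2,\tau^2)$ directly, since $\theta(y_i;\tau)^2 = \tau^2$ on $|y_i|\geq\tau$ and $= y_i^2$ otherwise. The term $\|\bm{t}\|_1 = \sum_i t_i$ passes through unchanged. The term $2\langle \bm{y},\bm{\theta}(\bm{y},\bm{t})\rangle$ contributes $-2\sum_i |y_i|\tau$ on $|y_i|\geq\tau$ and $-2\sum_i y_i^2$ on $|y_i|<\tau$. The work is in the two shifted inner products $-\langle \bm{t}+\bm{y},\bm{\theta}(\bm{y}-\bm{1},\bm{t}-\bm{1})\rangle$ and $+\langle\bm{t}-\bm{y},\bm{\theta}(\bm{y}+\bm{1},\bm{t}-\bm{1})\rangle$: since $\bm{\theta}$ does not depend on its second argument through its value (only $\theta(y_i\mp1;\tau)$ matters), these become $-\sum_i(t_i+y_i)\theta(y_i-1;\tau)$ and $+\sum_i(t_i-y_i)\theta(y_i+1;\tau)$. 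Evaluating $\theta(y_i\mp1;\tau)$ requires splitting into the regimes $y_i \geq \tau+1$, $y_i \leq -\tau-1$, and the boundary/interior cases $|y_i\mp1|<\tau$; on integer-valued $y_i$ one checks that the contributions combine so that the $t_i$-dependent pieces telescope to leave exactly $\sum_i \operatorname{sgn}(|y_i|-\tau)\,t_i$ together with a correction $-\tau\,\#\{i:|y_i|=\tau\}$ at the kink, while the $y_i$-dependent pieces cancel against the $-2\sum_i y_i^2$ and the $-2\sum_i|y_i|\tau$ terms from above, reassembling into $\sum_i\min(y_i^2,\tau^2)$. The constant $\sum_i t_i$ from $\|\bm{t}\|_1$ is independent of $\tau$ and so may be dropped from the minimization.

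The main obstacle is the careful bookkeeping at the threshold boundary: because $\bm{y}$ is integer-valued, the shifts $y_i \mapsto y_i \pm 1$ can move a coordinate across the kink $|y_i|=\tau$ in one direction but not the other, so the case analysis for $\theta(y_i-1;\tau)$ versus $\theta(y_i+1;\tau)$ is asymmetric and the $\#\{i:|y_i|=\tau\}$ correction term arises precisely from coordinates for which exactly one of the two shifted arguments lies at or past the threshold. I would organize this by tabulating, for each integer value of $y_i$ relative to $\tau$, the triple $(\theta(y_i;\tau),\theta(y_i-1;\tau),\theta(y_i+1;\tau))$ and verifying the claimed cancellation region by region; the interior and far-tail regions are immediate, and only the three or four values of $y_i$ nearest $\pm\tau$ require attention. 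Once the algebra is checked, dropping the $\tau$-independent constant $\|\bm{t}\|_1$ yields exactly~\eqref{eqn:skellamshrink}, and the corollary follows.
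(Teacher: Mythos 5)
Your proposal is correct and takes essentially the same route the paper intends: Corollary~\ref{cor:skellamshrink} is obtained by specializing the unbiased risk estimate of Theorem~\ref{thm:skellam} to the soft-thresholding $\bm{\theta}$ of~\eqref{eqn:st_theta} (noting, as you do, that it ignores its second argument) and carrying out the integer case analysis near $|y_i|=\tau$, which is exactly how the paper arrives at~\eqref{eqn:skellamshrink}. One bookkeeping remark: the term $\|\bm{t}\|_1$ is not a discardable constant separate from the shifted inner products but is absorbed by them---their $t_i$-dependent parts contribute $\operatorname{sgn}(|y_i|-\tau)t_i - t_i$ per coordinate, so adding $\|\bm{t}\|_1$ yields the $\sum_i \operatorname{sgn}(|y_i|-\tau)t_i$ term exactly (while the $-\tau\,\#\{i:|y_i|=\tau\}$ correction comes from the non-$t$ pieces at the kink); since $\sum_i t_i$ is independent of $\tau$ this misattribution only shifts the objective by a constant and does not affect the minimizer, so your argument still establishes the corollary.
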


\begin{rem}
\end{rem}\vspace{-0.75\baselineskip}
Recall the Stein's unbiased risk estimate SUREshrink result~\cite{ref:Donoho_1995} for soft thresholding in the case of additive white Gaussian noise of variance $\sigma^2$, as described in~\eqref{eqn:softThresh}--\eqref{eqn:SURE} of Section~\ref{sec:review_denoising}.  Recasting the objective function of~\eqref{eqn:SURE} for SUREshrink threshold optimization as
\begin{equation}\label{eqn:SURE2}
\hspace{-8.75em}\sum_i \operatorname{sgn}(|y_i|\!-\!\tau)\sigma^2 \! + \! \sum_i\min(y_i^2,\tau^2)\text{,}
\end{equation}
we see that $t_i$ in~\eqref{eqn:skellamshrink} plays a role analogous to $\sigma^2$ in the homoscedastic SUREShrink setting represented by~\eqref{eqn:SURE2}, with the dependence on coefficient index $i$ reflecting the heteroscedasticity present in the Skellam likelihood case. 

\subsubsection{SkellamShrink with Adjusted Thresholds}

We may also consider a generalization of the SkellamShrink soft thresholding estimator of Corollary~\eqref{cor:skellamshrink}, inspired by the Bayes point estimator $\operatorname{sgn}({Y}_i)\max(|{Y}_i|- \tau(s_i),0)$ of Theorem~\ref{thm:SkellamPost}, in which individual coefficient thresholds depend in general on the corresponding scaling coefficient.  By treating the quantity $\sigma_x$ appearing in the Bayesian estimators of Section~\ref{sec:BayesEst} not as a prior variance parameter, but simply as part of a parametric risk form to be optimized, we may appeal directly to the unbiased risk estimation formulation of Theorem~\ref{thm:skellam}. Since a priori knowledge limitations may well preclude exact prior elicitation in practice, this flexible approach provides a degree of robustness to prior model mismatch, as borne out by our simulation studies below.

As an example, consider a shrinkage estimator $\hat{X}_{i}(Y_i,T_i)=y_i+\theta(Y_i,T_i; \sigma_x)$ that depends on $T_i$ and unknown parameter $\sigma_x$ 
as per the soft thresholding formulation of~\eqref{eqn:SBT}:
\begin{equation}
\label{eqn:SkellamAdjusted}
\theta(Y_i,T_i; \sigma_x) = \begin{cases}
-\operatorname{sgn}(Y_i)\frac{\sqrt{2}}{\sigma_x}T_i& \text{if $|Y_i|\geq \frac{\sqrt{2}}{\sigma_x}T_i$}\\
-Y_i & \text{if $|Y_i| < \frac{\sqrt{2}}{\sigma_x}T_i$;}
\end{cases}
\end{equation}
Defining $\tilde{\sigma}_x = \sigma_x / \sqrt{2}$ and $\tilde{t}_i=(t_i-1)/\tilde{\sigma}_x$ for notational convenience, we have the risk estimate
\begin{align*}
& \,\, \sum_i \operatorname{sgn}\left(|y_i|-\tilde{t}_i\right) t_i
+\sum_i \min\left(y_i^2,{t_i^2}/{\tilde{\sigma}_x^2}\right)
\\
& \qquad \qquad \qquad -2\!\!\!\!\sum_{\!\!\!\!\!i:|y_i|>{t_i}/{\tilde{\sigma}_x}\!\!\!\!} \sqrt{{t_i^2}/{\tilde{\sigma}_x^2}} 
+\sum_{\!\!\!\!\!i:|y_i|=\lceil\tilde{t}_i-1\rceil\!\!\!\!\!} c(t_i)
+\sum_{\!\!\!\!\!i:|y_i|=\lfloor\tilde{t}_i+1\rfloor\!\!\!\!\!} d(t_i)
\text{;}
\\
c(t_i) & = t_i\left(\lceil \tilde{t}_i\rceil
-\tilde{t}_i\right)-\lceil (\tilde{t}_i-1)\rceil^2+\left(\tilde{t}_i-1\right)\lceil
(\tilde{t}_i-1)\rceil \text{,}
\\
d(t_i) & = \begin{cases}
t_i(\lfloor\tilde{t}_i\rfloor-\tilde{t}_i)-\lfloor
(\tilde{t}_i+1)\rfloor^2+\left(\tilde{t}_i-1\right)\lfloor(\tilde{t}_i+1)\rfloor
& \text{if ${t_i}/{\tilde{\sigma}_x} \geq \lfloor\tilde{t}_i+1\rfloor$} \\
t_i(\lfloor\tilde{t}_i\rfloor-\tilde{t}_i)+\lfloor(\tilde{t}_i+1)\rfloor^2-\left(\tilde{t}_i-1\right)\lfloor(\tilde{t}_i+1)\rfloor
& \text{if ${t_i}/{\tilde{\sigma}_x}<\lfloor\tilde{t}_i+1\rfloor$.}
\end{cases}
\end{align*}
with $\lfloor \cdot \rfloor$ and $\lceil \cdot \rceil$ denoting the floor and ceiling operators, respectively, and $c(t_i)$ and $d(t_i)$ adjusting for the singularity at $|y_i|=\lceil\tau_i\rceil\pm 1$.

\subsubsection{Unbiased Risk Estimates for Variance-Stabilized Shrinkage}

The strategy outlined above naturally generalizes to any form of parametric estimator via the unbiased risk estimation formulation of Theorem~\ref{thm:skellam}, enabling an improvement over the variance-stabilization strategies of Section~\ref{sec:intro} by direct minimization of empirical risk.  As a specific example, consider the Haar-Fisz estimator of~\cite{ref:Fryzlewicz_2004}, in which each empirical Haar wavelet coefficient $y_i$ is scaled by the root of its corresponding empirical scaling coefficient as  $\tilde{y}_i := y_i/\sqrt{t_i}$ in order to achieve variance stabilization, after which standard Gaussian shrinkage methods such as SUREShrink are applied and the variance stabilization step inverted.

For the case of nonlinear shrinkage operators, of course, neither the resultant estimators nor the risk estimates themselves will in general commute with this Haar-Fisz strategy, leading to a loss of the unbiasedness property of risk minimization---in contrast to the direct application of Theorem~\ref{thm:skellam}.  Taking Haar-Fisz soft thresholding with some fixed threshold $\tau$ as an example, the equivalent Skellam shrinkage rule is seen to be $\hat{X}_i(Y_i, T_i; \tau) = \operatorname{sgn}({Y}_i)\max(|{Y}_i|-\sqrt{T_i}\,\tau,0)$---with $\sqrt{T_i}$ in contrast to the scaling of $T_i$ implied by Theorem~\ref{thm:SkellamPost}, as in the adjusted-threshold approach of~\eqref{eqn:SkellamAdjusted} above.  In an analogous manner, the corresponding \emph{exact} unbiased risk estimate for this shrinkage rule can in turn be derived directly by appeal to Theorem~\ref{thm:skellam}, rather than relying on the heretofore standard Haar-Fisz approach of SUREShrink empirical risk minimization via~\eqref{eqn:SURE2}, applied to the variance-stabilized coefficients $\tilde{y}_i$.

\subsubsection{Empirical Bayes via Method of Moments}
\label{sec:EBayes}

We conclude this section with a simple and effective empirical Bayes strategy for estimating scaling coefficients $\{s_i\}$ and prior parameter $\sigma_x$ for the Bayesian shrinkage rules derived in Section~\ref{sec:BayesEst} above.  Recall from~\eqref{eqn:ypm} that $s_i = \operatorname{\mathbb{E}}T_i$, implying the use of the empirical scaling coefficient $t_i$ as a direct substitute for $s_i$ in the Bayesian setting.  Note that $s_i = \sum_{j:\,|W_{ij}|=1} f_j$ for Haar transform matrix $\bm{W}$, with $T_i$ a corresponding sum of Poisson variates with means $f_j$ representing the underlying intensities of interest to be estimated.  In turn, as the sum $s_i$ increases, the relative risk $\operatorname{\mathbb{E}} | T_i - s_i |^2 / \,s_i^2$ of the plug-in estimator $\hat{s}_i = T_i$ will rapidly go to zero precisely at rate $1/s_i$.

Next note that under the assumption of a unimodal, zero-mean, and symmetric prior distribution $p(X;s)$, only $\operatorname{Var}X$ remains to be estimated.  A convenient moment estimator is available, since
$T_i\sim \mathcal{P}(s_i)$ and $Y_i\sim \mathcal{S}(x_i,s_i)$ together imply that $\operatorname{Var}T_i = \operatorname{Var}Y_i = \operatorname{\mathbb{E}}Y_i^2 - X_i^2$, and hence we obtain $\widehat{\operatorname{Var}}X = (1/N)\sum_i y_i^2-t_i$.  Once estimates $\widehat{\operatorname{Var}}X$ and $\{s_i\}$ are obtained for the coefficient population of interest, the implicit variance equations of~\eqref{eqn:GaussImplicitVar} and~\eqref{eqn:LaplaceImplicitVar} may be solved numerically to yield scale parameter $\sigma_x$ of the truncated generalized Gaussian distribution considered earlier, with
$\sigma_x^2 = \operatorname{Var}X$ in the limit as $s$ grows large.  In our simulation regimes, we observed no discernable difference in overall wavelet-based estimation performance by setting $\sigma_x^2 = \widehat{\operatorname{Var}}X$ directly. 

\section{Simulation Studies}
\label{sec:results}

We now describe a series of simulation studies undertaken to evaluate the efficacy of the wavelet-based shrinkage estimators derived above.  We considered exact Skellam Bayes (SB) posterior mean estimators, computed numerically with respect to a given prior; the Skellam Bayes Gaussian approximation (SBG) linear shrinkage of~\eqref{eqn:estimator-Gaussian}; the Skellam Bayes Laplacian soft-thresholding (SBT) approximation of~\eqref{eqn:SBT}; the Skellam Bayes Laplacian piecewise-linear (SBL) approximation of~\eqref{eqn:SBL}; the SkellamShrink (SS) soft-thresholding estimator with empirical risk minimization of Corollary~\ref{cor:skellamshrink}; and the SkellamShrink hybrid (SH) adjusted-threshold shrinkage of~\eqref{eqn:SkellamAdjusted}.

Estimators were implemented using a 3-level undecimated Haar wavelet decomposition, with empirical risk minimization or the moment methods of Section~\ref{sec:EBayes} above used to estimate parameters for the corresponding shrinkage rules.  As first comparison of relative performance, Figs.~\ref{fig:boxplot1} and~\ref{fig:boxplot2} tabulate results in mean-squared error (MSE) for Skellam likelihood inference in cases when the latent variables of interest are drawn from Normal and Laplacian distributions with known parameter $\sigma_x^2 \in \{32, 64, 128\}$. The accompanying box plots are shown on a log-MSE scale for visualization purposes, in order to better reveal differences between estimator performance.  These figures confirm that exact Bayesian estimators (SB) outperform all others, but indicate that prior-specific Skellam Bayes approximations SBG and SBL are comparable, respectively, for the Gaussian and Laplacian cases over the range of prior parameters shown here.  Among soft-thresholding approaches, the frequentist SkellamShrink methods SS and SH in turn offer improvements over the Bayesian soft-thresholding estimator SBT.
\begin{figure*}
\begin{center}
{\footnotesize
\begin{tabular}{c|rrrrr|rrrrr}
Gaussian & \multicolumn{5}{c|}{(a)~$\sigma_x^2=32$} & \multicolumn{5}{c}{(b)~$\sigma_x^2=64$}\\
MSE &\multicolumn{1}{c}{SS}
 &\multicolumn{1}{c}{SB}
 &\multicolumn{1}{c}{SBG}
 &\multicolumn{1}{c}{SBT}
 &\multicolumn{1}{c|}{SH}
 &\multicolumn{1}{c}{SS}
 &\multicolumn{1}{c}{SB}
 &\multicolumn{1}{c}{SBG}
 &\multicolumn{1}{c}{SBT}
 &\multicolumn{1}{c}{SH}
\\
\hline
Mean&30.22 & 24.03 & 24.03 & 31.32 & 29.98 & 47.75 & 39.01 & 39.01 & 53.42 &
48.28\\
Median&13.53 & 11.14 & 11.13 & 14.45 & 13.88 & 21.66 & 18.32 & 18.33 & 24.57 &
21.95\\
Std.~Dev.&44.65 & 33.20 & 33.20 & 43.58 & 42.05 & 68.22 & 54.99 & 54.99 & 73.92 &
67.92\\
\end{tabular}\\\begin{tabular}{c|rrrrr|rrrrr}
 & \multicolumn{5}{c}{(c)~$\sigma_x^2=128$}\\
 &\multicolumn{1}{c}{SS}
 &\multicolumn{1}{c}{SB}
 &\multicolumn{1}{c}{SBG}
 &\multicolumn{1}{c}{SBT}
 &\multicolumn{1}{c|}{SH}
 \\
\hline
Mean& 66.71 & 56.11 & 56.11 & 73.76 & 67.02 \\
Median& 29.88 & 25.21 & 25.20 & 34.35 & 30.38 \\
Std.~Dev.& 97.04 & 80.20 & 80.22 & 103.29 & 96.12 \\
\end{tabular}
}\\\subfigure[$\sigma_x^2=32$]{\includegraphics[width=.32\textwidth]{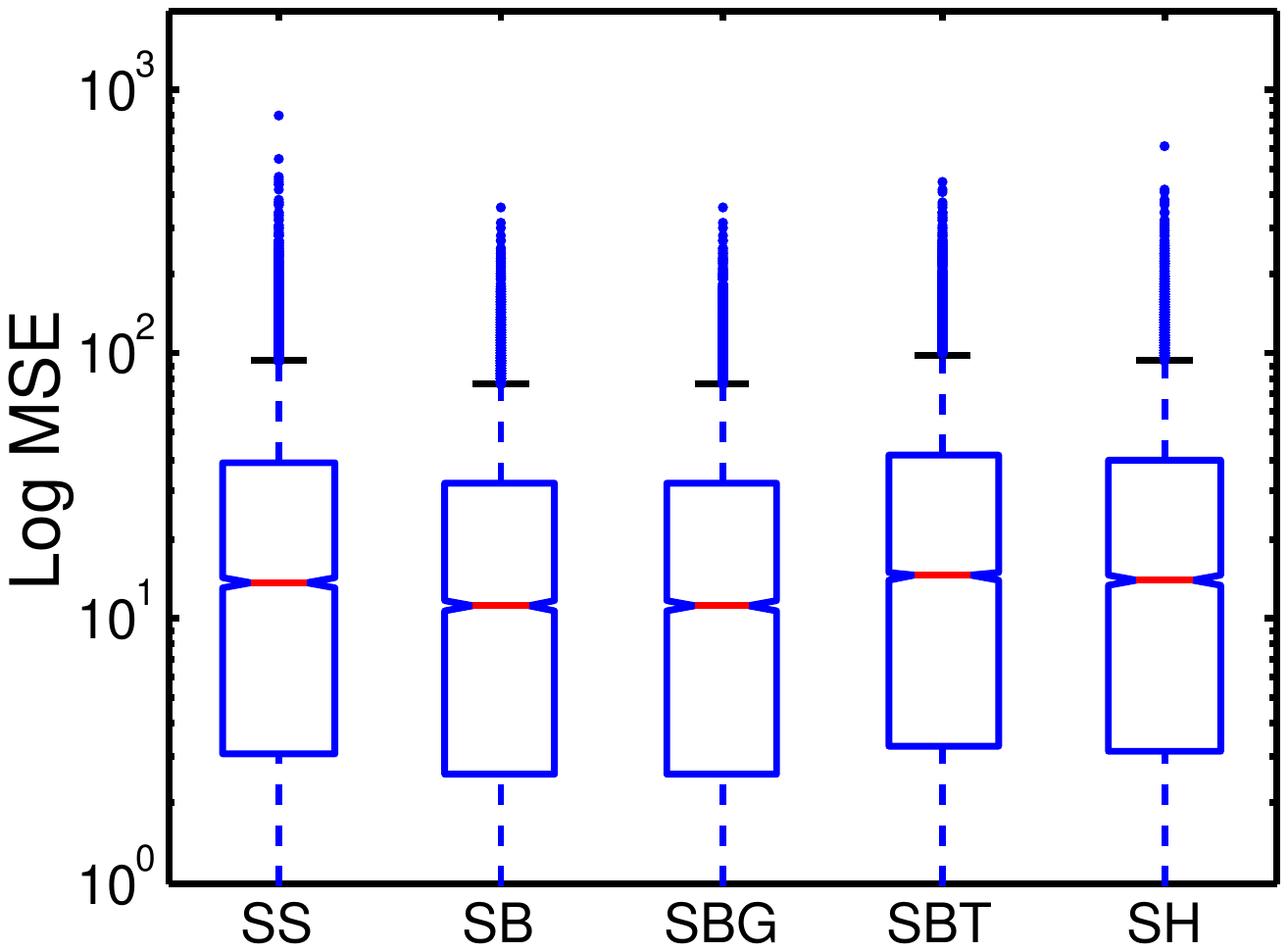}}
\subfigure[$\sigma_x^2=64$]{\includegraphics[width=.32\textwidth]{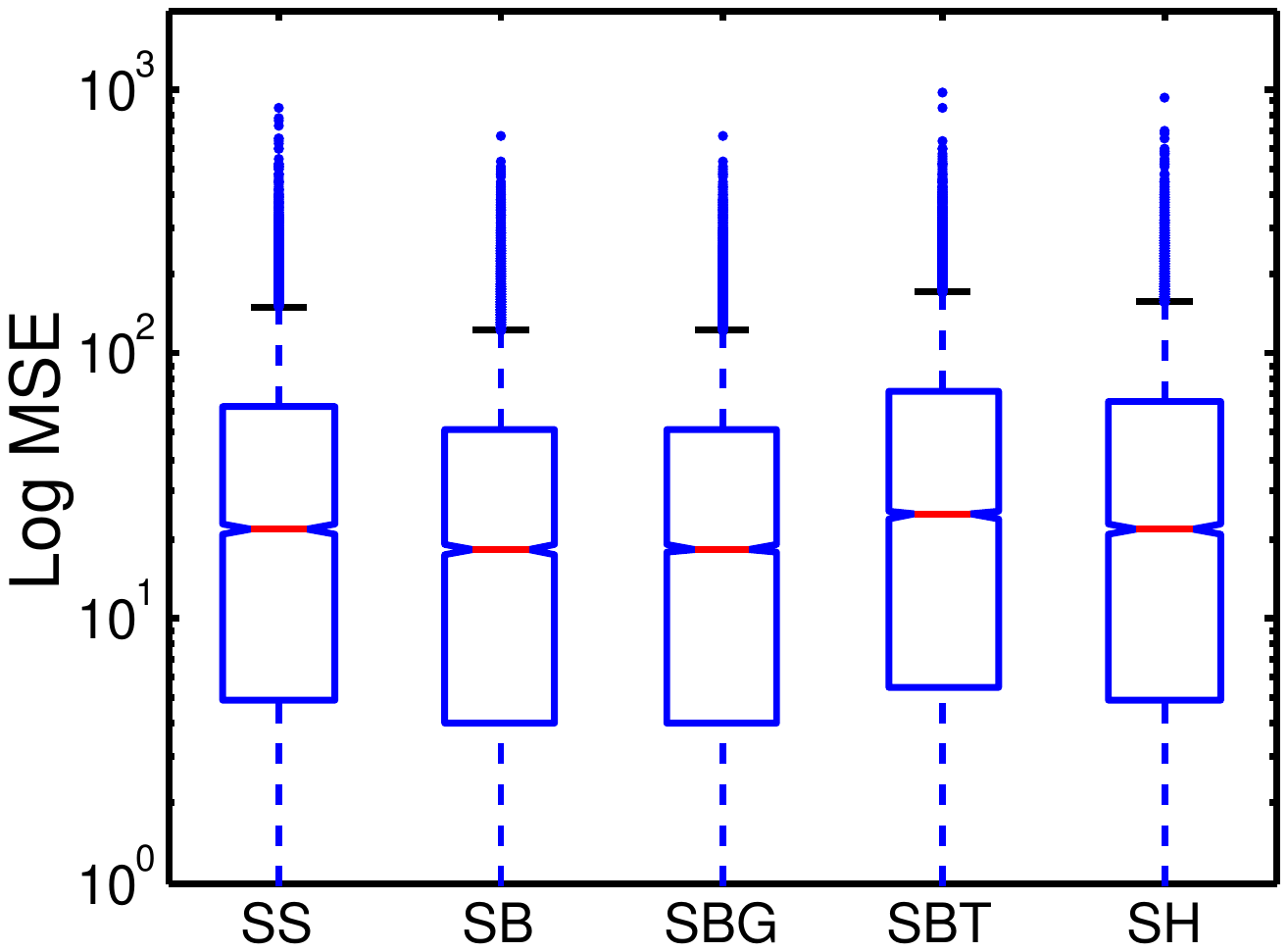}}
\subfigure[$\sigma_x^2=128$]{\includegraphics[width=.32\textwidth]{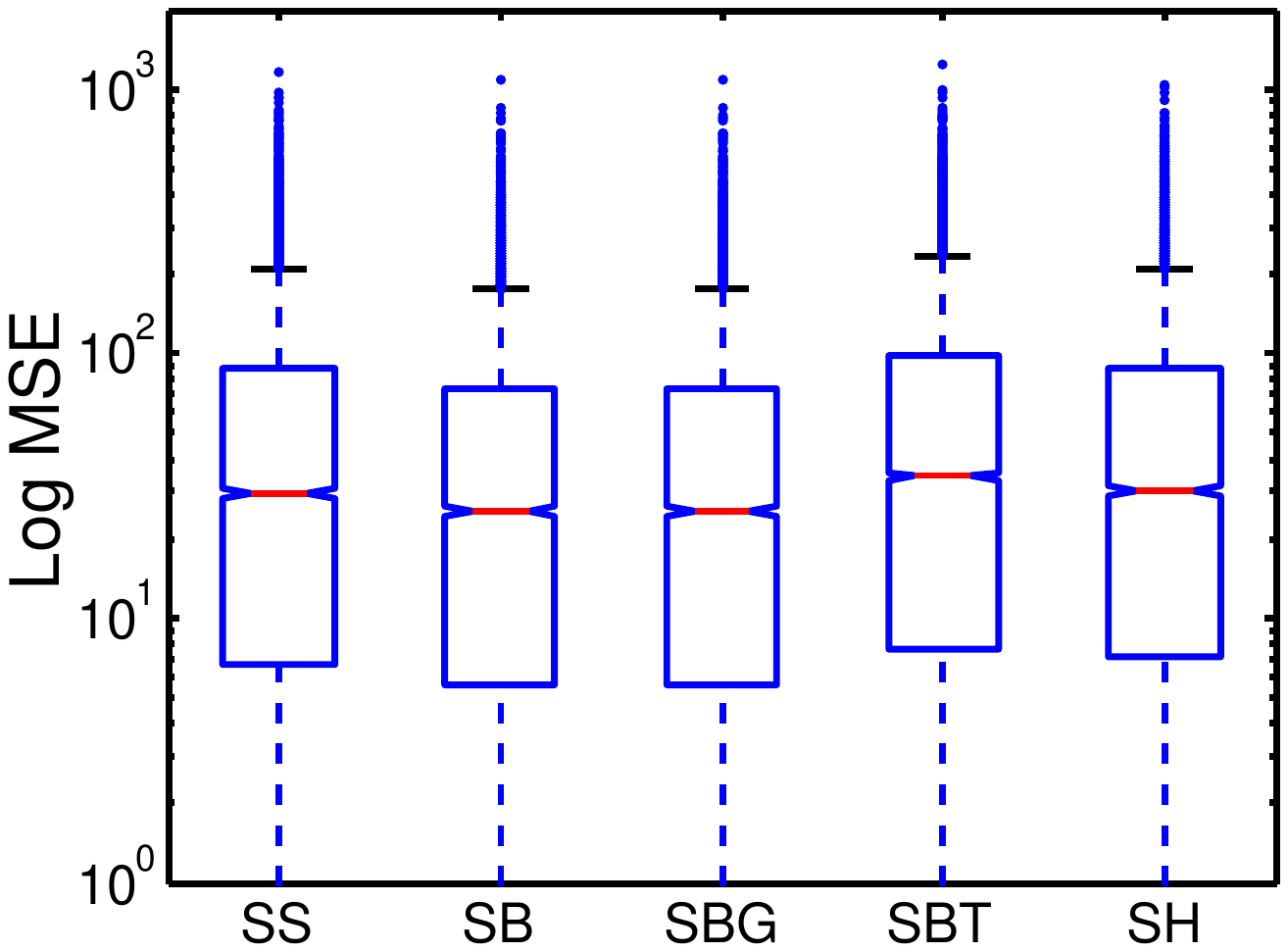}}
\caption{\label{fig:boxplot1}Empirical performance as measured by MSE, with $x$ drawn from a truncated Normal
  distribution and scaling coefficient $s$ fixed to $100$}
\end{center}
\end{figure*}
\begin{figure*}
\begin{center}
{\footnotesize
\begin{tabular}{c|rrrrr|rrrrr}
Laplacian & \multicolumn{5}{c|}{(a)~$\sigma_x^2=32$} & \multicolumn{5}{c}{(b)~$\sigma_x^2=64$} \\
MSE &\multicolumn{1}{c}{SS}
 &\multicolumn{1}{c}{SB}
 &\multicolumn{1}{c}{SBL}
 &\multicolumn{1}{c}{SBT}
 &\multicolumn{1}{c|}{SH}
 &\multicolumn{1}{c}{SS}
 &\multicolumn{1}{c}{SB}
 &\multicolumn{1}{c}{SBL}
 &\multicolumn{1}{c}{SBT}
 &\multicolumn{1}{c}{SH}\\
\hline
Mean&27.64 & 24.05 & 24.37 & 29.97 & 27.77 & 42.90 & 37.98 & 38.44 & 46.78 &
42.75\\
Median& 7.26 & 7.53 & 7.21 & 7.12 & 7.10 & 13.96 & 13.59 & 13.11 & 13.84 &
13.66\\
Std.~Dev.&55.11 & 46.96 & 49.92 & 64.34 & 56.07 & 75.24 & 66.12 & 69.74 & 86.44 &
75.74\\
\end{tabular}
\begin{tabular}{c|rrrrr|}
 & \multicolumn{5}{c}{(c)~$\sigma_x^2=128$}\\
 &\multicolumn{1}{c}{SS}
 &\multicolumn{1}{c}{SB}
 &\multicolumn{1}{c}{SBL}
 &\multicolumn{1}{c}{SBT}
 &\multicolumn{1}{c|}{SH}\\
\hline
Mean& 59.63 & 54.51 & 55.18 & 64.19 & 59.91 \\
Median& 21.65 & 20.94 & 20.47 & 22.13 & 21.63 \\
Std.~Dev.& 96.08 & 86.43 & 89.60 & 105.59 & 96.87 \\
\end{tabular}
}\\\subfigure[$\sigma_x^2=32$]{\includegraphics[width=.32\textwidth]{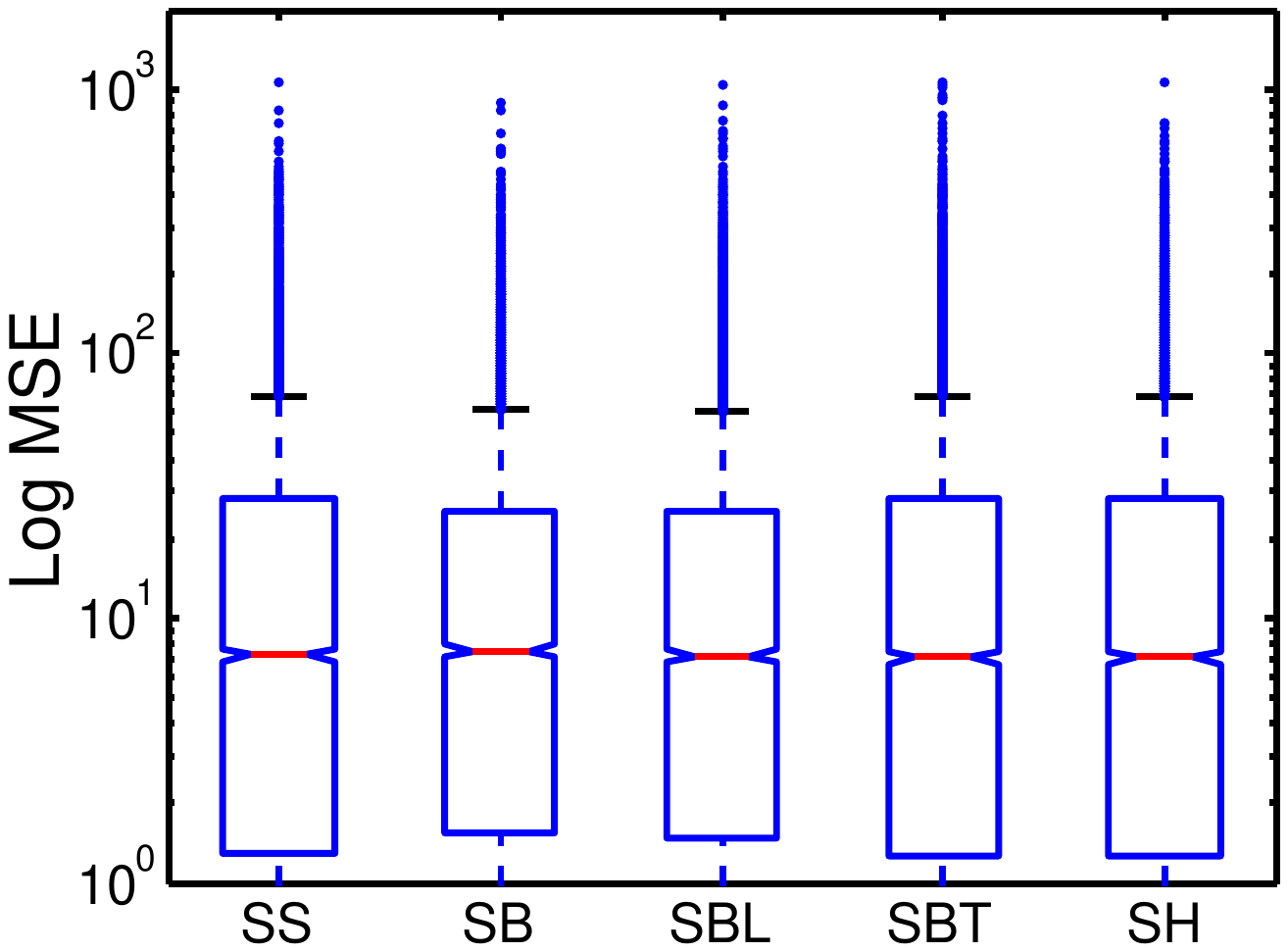}}
\subfigure[$\sigma_x^2=64$]{\includegraphics[width=.32\textwidth]{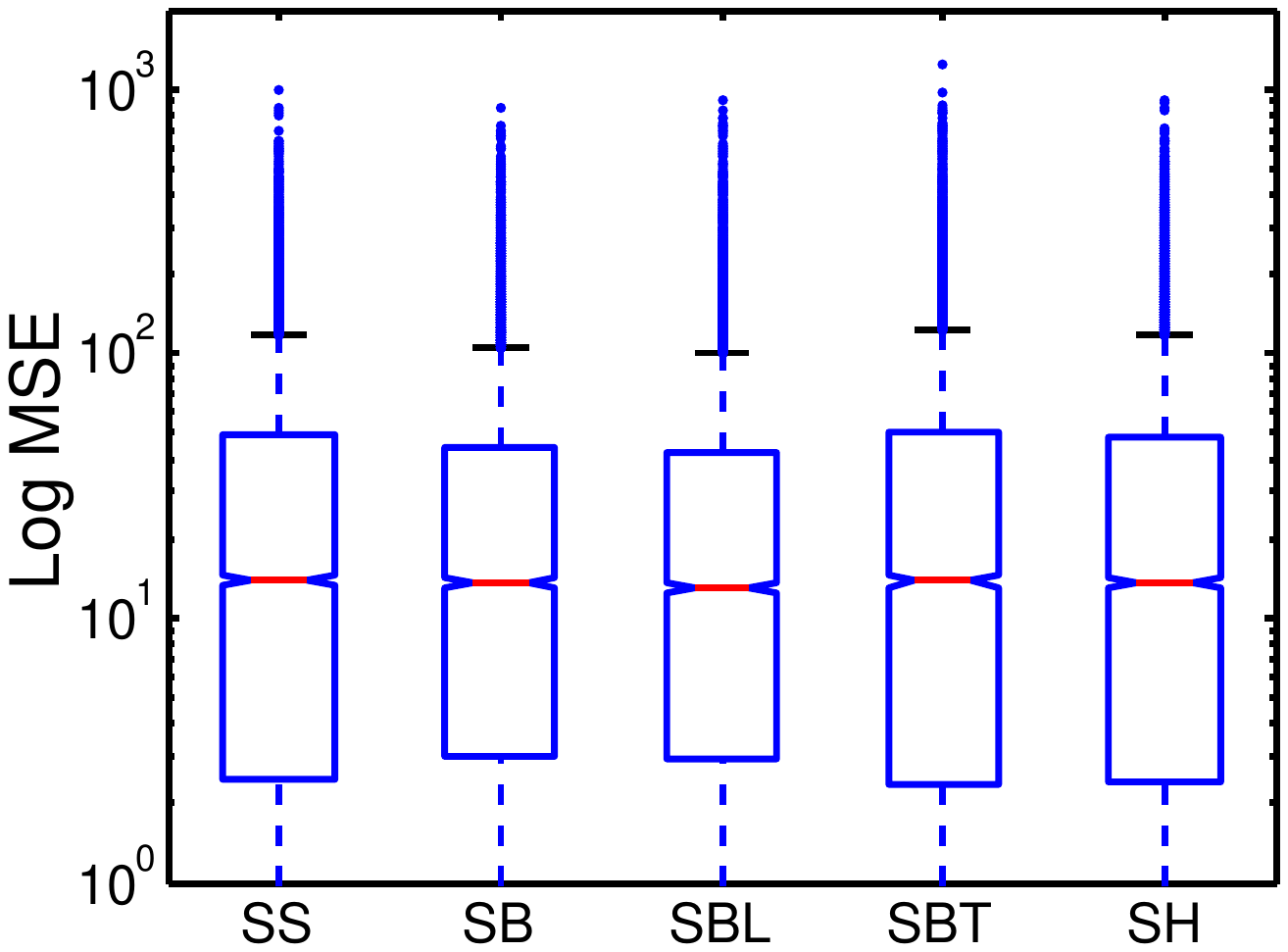}}
\subfigure[$\sigma_x^2=128$]{\includegraphics[width=.32\textwidth]{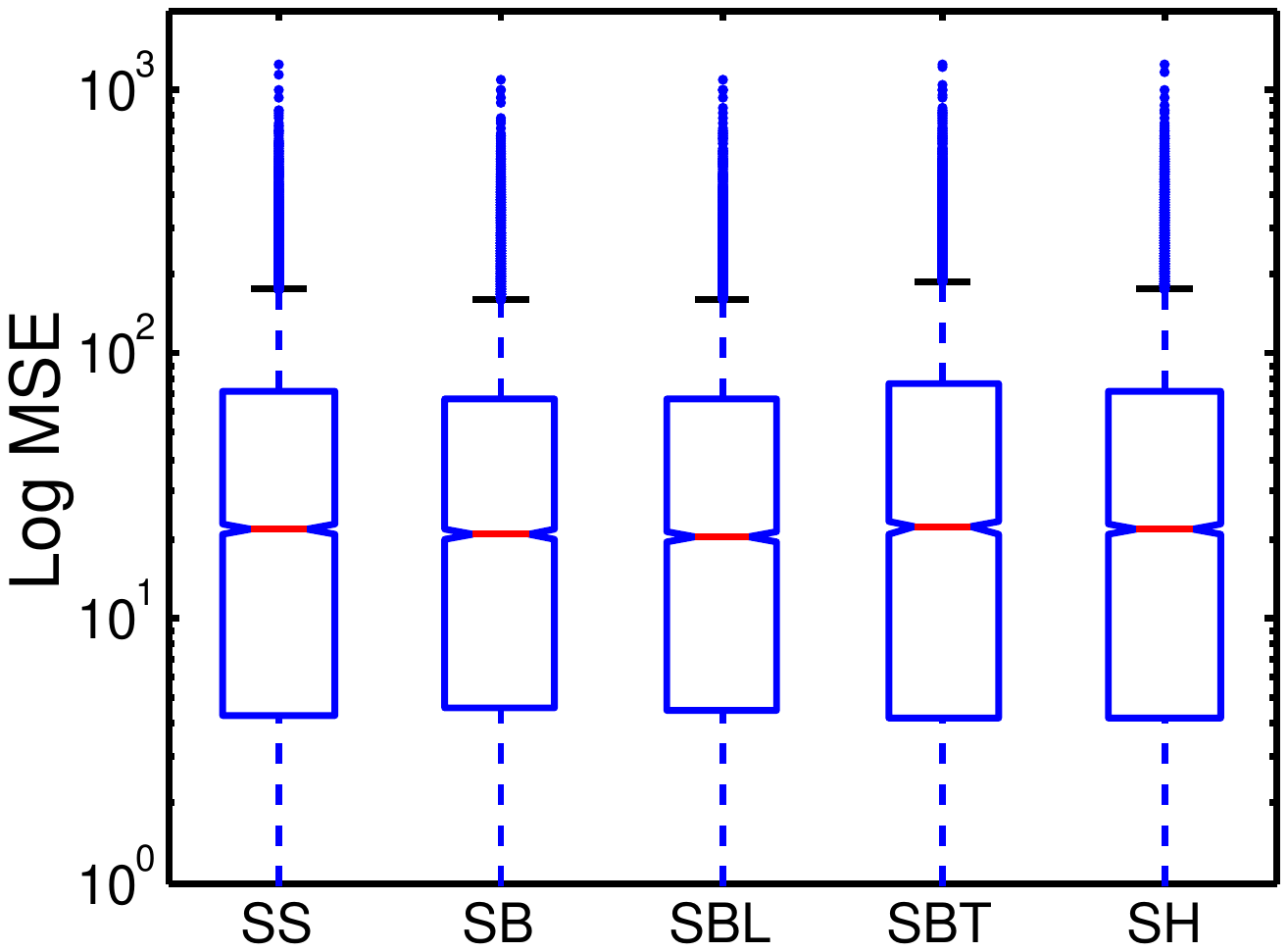}}
\caption{\label{fig:boxplot2}Empirical performance as measured by MSE, with $x$ drawn from a truncated Laplacian distribution and scaling coefficient $s$ fixed to $100$}
\end{center}
\end{figure*}

\subsection{Evaluation via Standard Wavelet Test Functions}

We next consider the standard set of univariate wavelet test functions: ``smooth,'' ``blocks,'' ``bumps,'' ``angles,'' ``spikes,'' and ``bursts,'' as illustrated in Fig.~\ref{fig:example1}.  A thorough comparative evaluation of several Poisson intensity estimation methods using these test functions is detailed in~\cite{ref:Besbeas_2004}, and here we repeat the same set of experiments using the estimators outlined above, along with the best-performing methods reviewed in~\cite{ref:Besbeas_2004}---including variance stabilization techniques currently in wide use as well as the more recent methods of~\cite{ref:Kolaczyk_1999b,ref:Timmermann_1999}.  To retain consistency with the experimental procedure of~\cite{ref:Besbeas_2004}, all methods except for~\cite{ref:Kolaczyk_1999b} were implemented using a 5-level translation-invariant wavelet decomposition;  the implementation of~\cite{ref:Kolaczyk_1999b} provided by~\cite{ref:Besbeas_2004} employs a decomposition level that is logarithmic in the data size, which we retained here.

As can be seen from Fig.~\ref{fig:example2}, the Skellam-based techniques we propose here measure
well against alternatives despite the diversity of features across these test functions, and the corresponding possibilities of model mismatch with respect to any assumed prior distribution of wavelet coefficients.  Overall, it can be seen that only the multiscale model of~\cite{ref:Kolaczyk_1999b} offers comparable performance.
\begin{figure*}
\centering
\subfigure[Smooth]{\includegraphics[width=.3\textwidth]{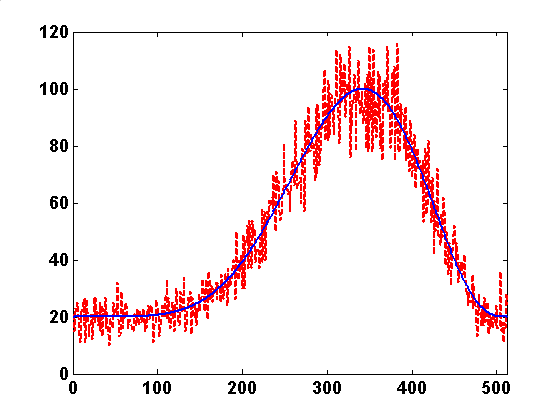}}
\subfigure[Blocks]{\includegraphics[width=.3\textwidth]{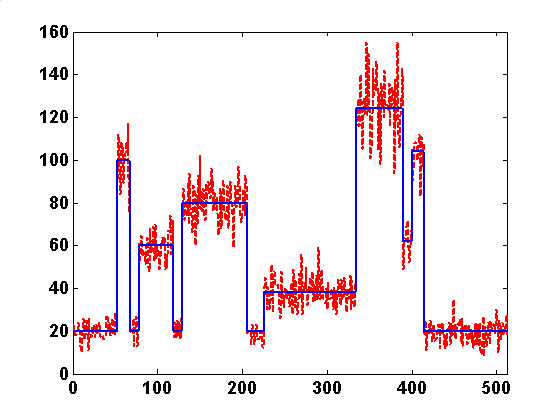}}
\subfigure[Bumps]{\includegraphics[width=.3\textwidth]{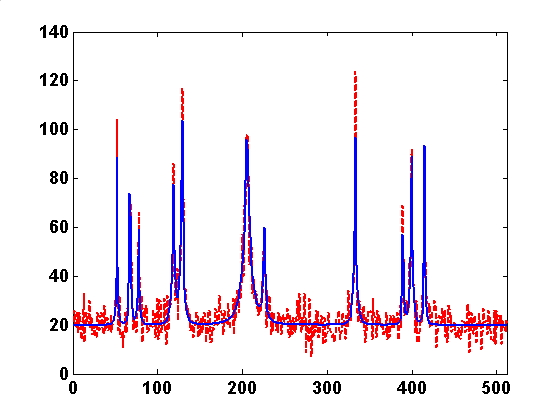}}
\subfigure[Angles]{\includegraphics[width=.3\textwidth]{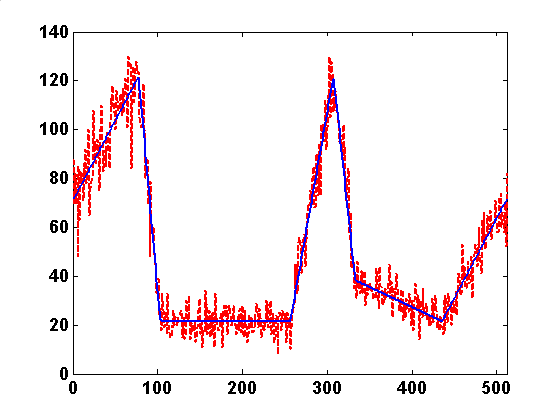}}
\subfigure[Spikes]{\includegraphics[width=.3\textwidth]{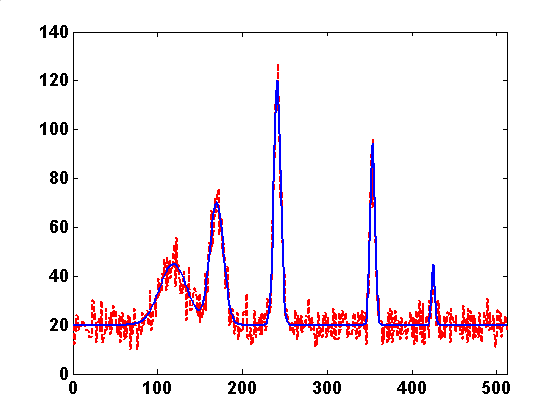}}
\subfigure[Bursts]{\includegraphics[width=.3\textwidth]{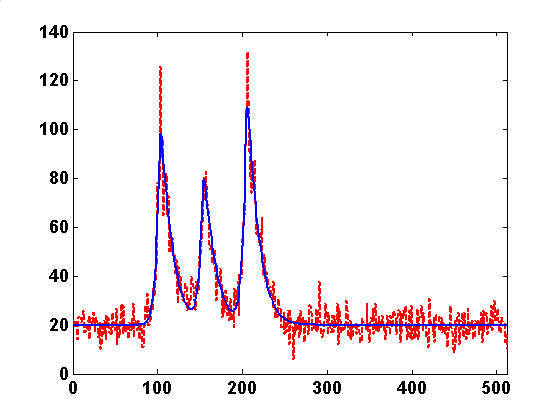}}
\caption{\label{fig:example1}Prototype intensity functions and corresponding Poisson-corrupted versions~\cite{ref:Besbeas_2004}}
\end{figure*}
\begin{figure*}
\centering
\subfigure[Smooth]{\includegraphics[width=.49\textwidth]{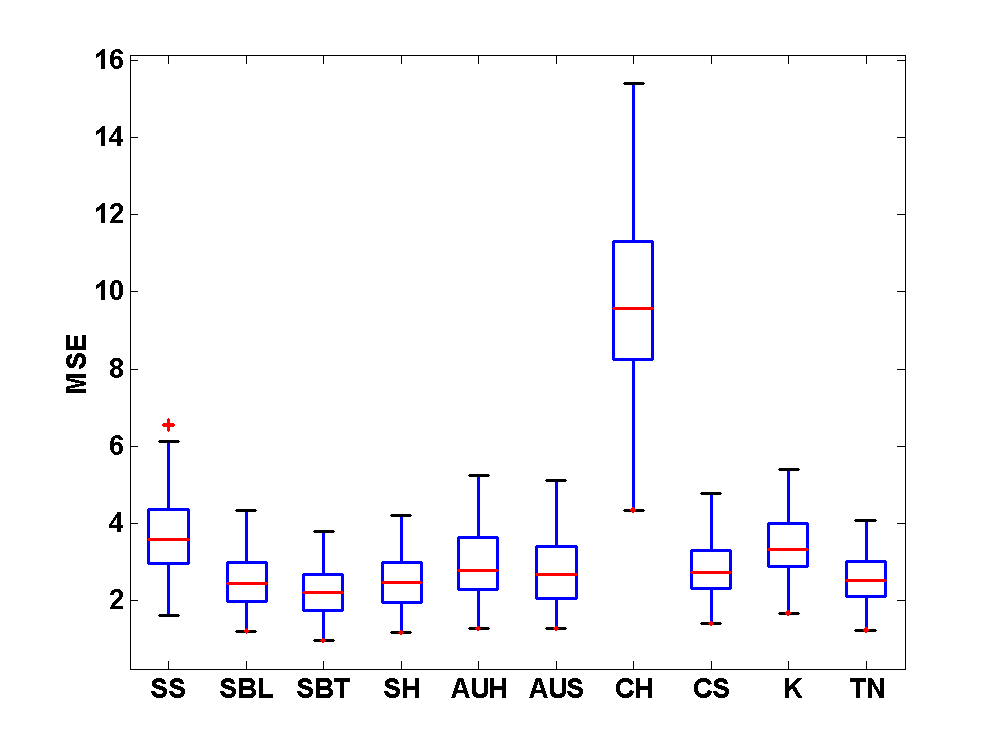}}
\subfigure[Blocks]{\includegraphics[width=.49\textwidth]{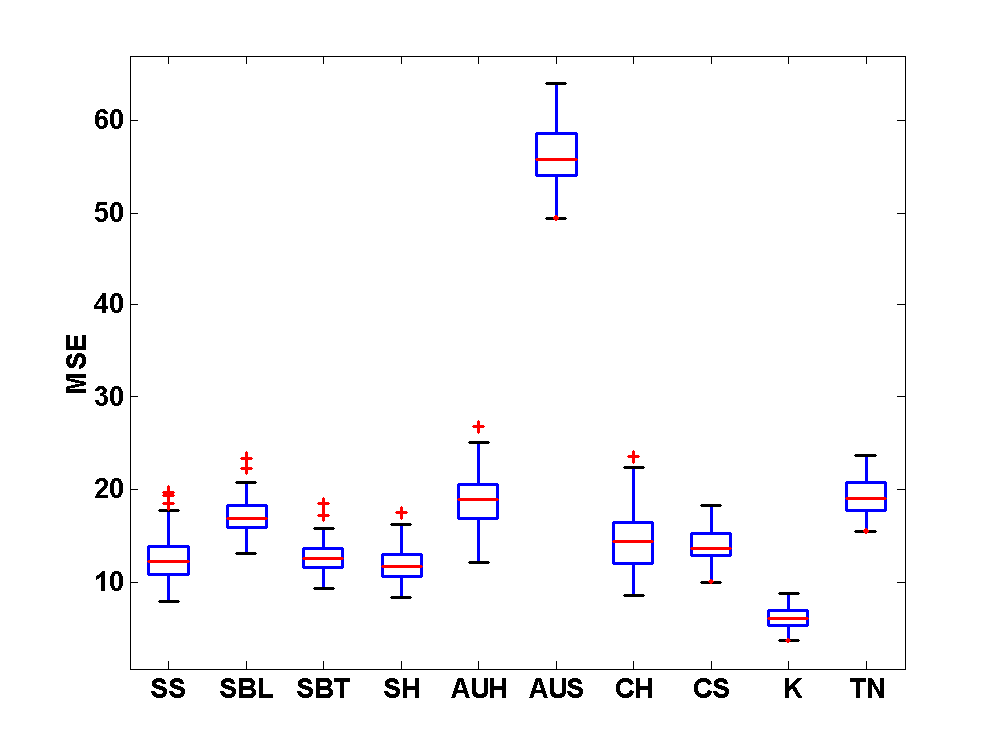}}
\subfigure[Bumps]{\includegraphics[width=.49\textwidth]{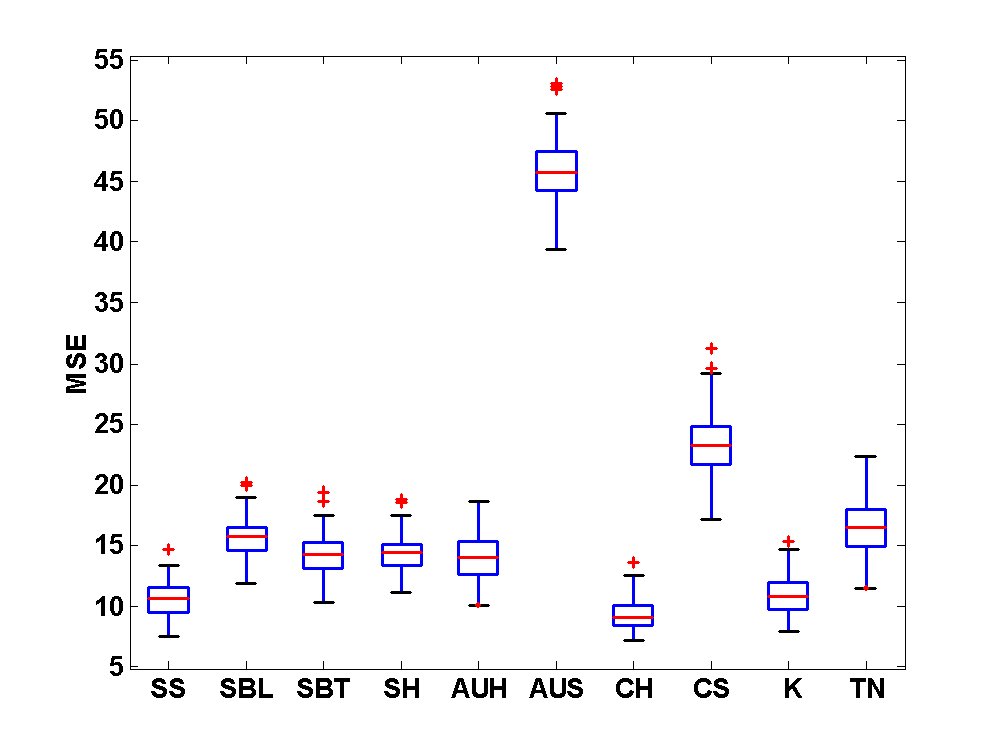}}
\subfigure[Angles]{\includegraphics[width=.49\textwidth]{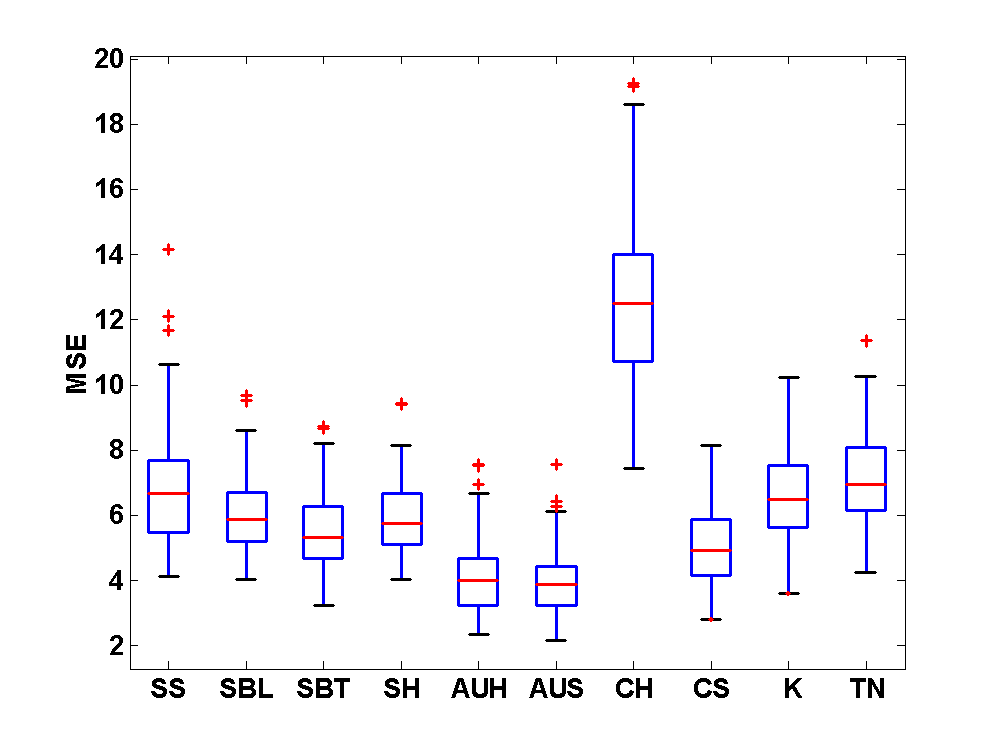}}
\subfigure[Spikes]{\includegraphics[width=.49\textwidth]{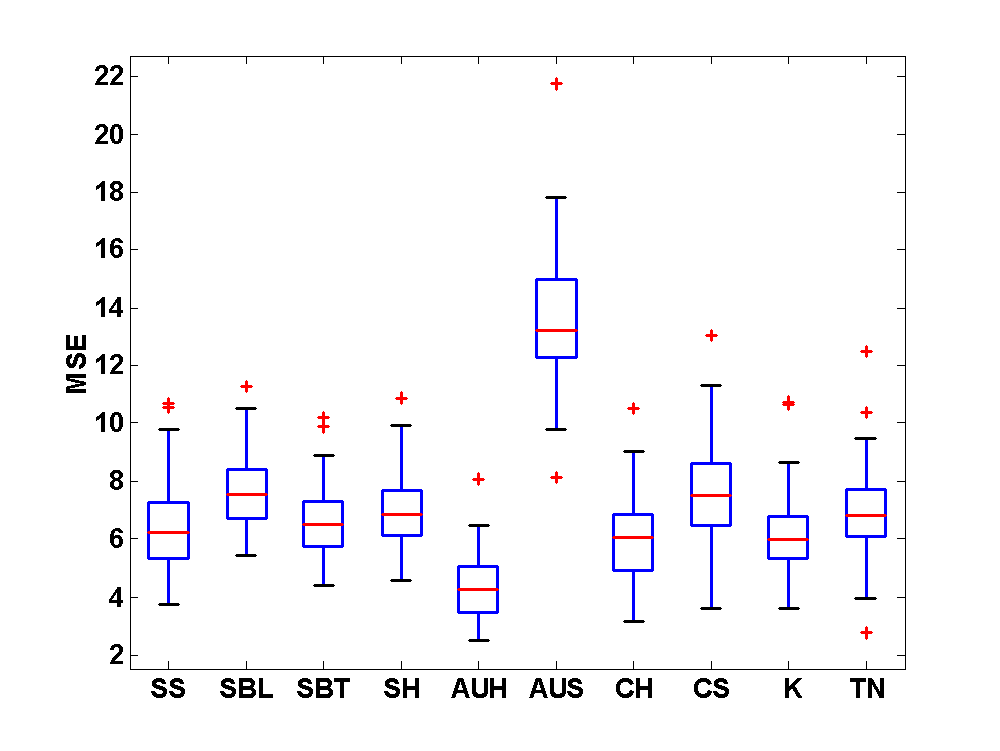}}
\subfigure[Bursts]{\includegraphics[width=.49\textwidth]{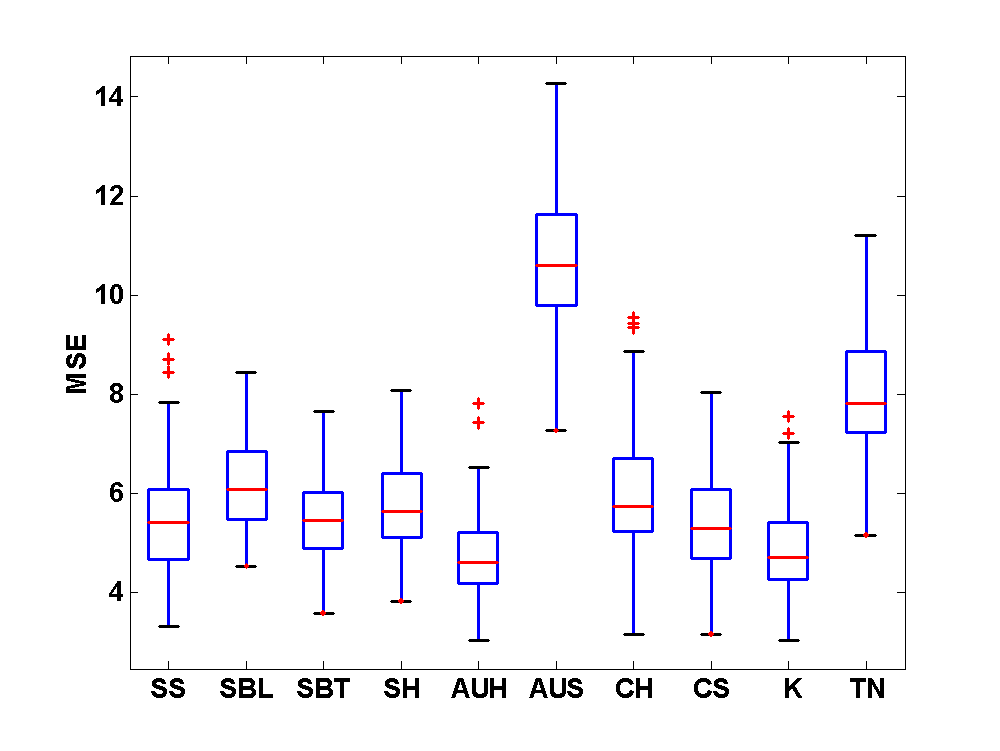}}
\caption{Mean-squared error, averaged over 100 trials, corresponding to reconstruction of the prototype
  functions of Fig.~\ref{fig:example1}.  (Note the difference in scale across the figure panels.) Skellam-based approaches comprise the left-hand portion of each figure panel; AUH/AUS denotes Anscombe variance stabilization \cite{ref:Anscombe_1948} with hard/soft universal thresholding~\cite{ref:Donoho_1994};
  CH/CS denotes corrected hard/soft thresholding~\cite{ref:Kolaczyk_1999a};
  K indicates the multiscale model of Kolaczyk~\cite{ref:Kolaczyk_1999b}; and
  TN is the multiscale multiplicative innovation model of Timmermann \& Nowak~\cite{ref:Timmermann_1999}
}
\label{fig:example2}
\end{figure*}

\subsection{Error and Perceptual Quality for Standard Test Images}

\begin{figure*}
\centering
\subfigure[Original test image]{\includegraphics[width=.49\textwidth]{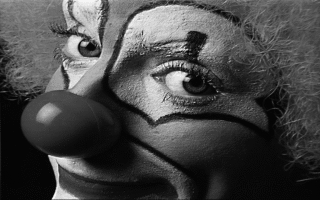}}
\subfigure[Noisy test image (SNR $\approx$ 10~dB)]{\includegraphics[width=.49\textwidth]{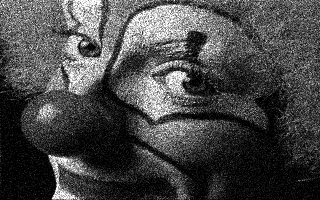}}
\subfigure[SkellamShrink (SS)]{\includegraphics[width=.49\textwidth]{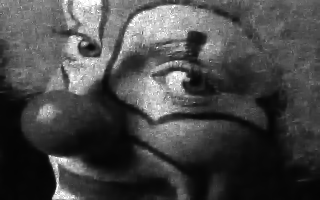}}
\subfigure[Bayes exact posterior mean (SB)]{\includegraphics[width=.49\textwidth]{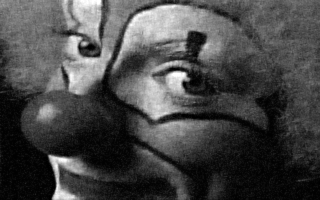}}
\subfigure[Bayes Laplacian thresholding (SBT)]{\includegraphics[width=.49\textwidth]{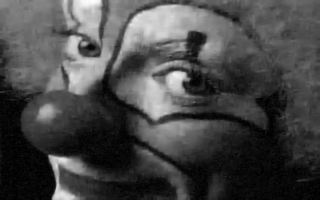}}
\subfigure[Adjusted-threshold hybrid (SH)]{\includegraphics[width=.49\textwidth]{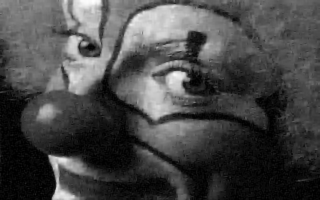}}
\subfigure[Haar-Fisz~\cite{ref:Fryzlewicz_2004} with~\cite{ref:Portilla_2003}]{\includegraphics[width=.49\textwidth]{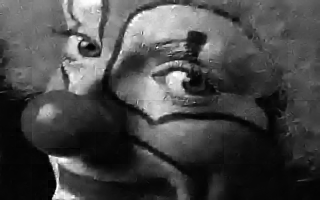}}
\subfigure[Multiscale multiplicative model~\cite{ref:Timmermann_1999}]{\includegraphics[width=.49\textwidth]{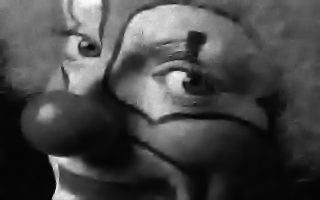}}
\caption{Performance comparison of the wavelet-based estimators derived in Section~\ref{sec:estimation} relative to existing approaches, shown for the ``clown'' test image}
\label{fig:example3}
\end{figure*}

\begin{table*}
\caption{Average reconstruction SNR~(dB) for a set of standard test images}
\label{tab:SNR}
\centering
\begin{tabular}{c|cccccccccc}
&SNR&SS&SB&SBL&SBT&SH&\cite{ref:Anscombe_1948, ref:Donoho_1995}&\cite{ref:Fryzlewicz_2004, ref:Donoho_1995}&\cite{ref:Kolaczyk_1999b}&\cite{ref:Timmermann_1999}\\
\hline\hline
Mean   & & 14.63 & 15.56 & 15.55 & 15.53 & 15.67 &  7.13 &  6.68
&  9.25 & 15.42 \\
Median &   & 14.28 & 15.24 & 15.26 & 15.31 & 15.35 &  6.35 &  6.67
&  9.30 & 15.30 \\
Min    & 0~dB  & 12.67 & 13.51 & 13.42 & 13.00 & 13.39 &  6.23 &  6.39
&  8.73 & 12.52 \\
Max    &  & 16.38 & 17.69 & 17.70 & 17.82 & 17.89 &  9.20 &  7.11
&  9.49 & 17.76 \\
Std.~Dev.&  &  1.38 &  1.62 &  1.64 &  1.78 &  1.70 &  1.23 &  0.30
&  0.25 &  1.90 \\
\hline
Mean &  & 15.96 & 16.71 & 16.69 & 16.64 & 16.85 & 10.36 &  9.79 &
10.99 & 16.39 \\
Median &  & 15.62 & 16.08 & 16.10 & 16.08 & 16.17 & 10.59 &  9.90
& 11.10 & 16.05 \\
Min& 3~dB & 14.28 & 14.73 & 14.68 & 14.27 & 14.80 &  8.86 &  8.97 &
10.25 & 13.15 \\
Max&  & 18.24 & 19.02 & 19.05 & 19.15 & 19.26 & 10.90 & 10.15 &
11.34 & 19.00 \\
Std.~Dev.&  &  1.53 &  1.75 &  1.78 &  1.92 &  1.84 &  0.71 &  0.42
&  0.35 &  2.15 \\
\hline
Mean& & 19.78 & 19.81 & 19.77 & 19.80 & 20.17 & 16.36 & 16.36 &
17.15 & 19.46 \\
Median& & 19.62 & 19.48 & 19.50 & 19.76 & 20.35 & 16.77 & 16.63 &
17.62 & 20.22 \\
Min& 10~dB & 17.58 & 17.68 & 17.51 & 17.46 & 17.71 & 15.22 & 15.41 &
15.82 & 16.36 \\
Max& & 22.27 & 22.23 & 22.22 & 22.34 & 22.61 & 16.87 & 17.15 &
18.20 & 22.20 \\
Std.~Dev.& &  1.89 &  1.86 &  1.89 &  2.01 &  1.98 &  0.65 &  0.66
&  0.95 &  2.37 \\
\end{tabular}
\end{table*}

\begin{table*}
\caption{Average reconstruction SSIM for images at 0, 3, and 10~dB SNR}
\label{tab:SSIM}
\centering
\begin{tabular}{c|cccccccccc}
&Image&SS&SB&SBL&SBT&SH&\cite{ref:Anscombe_1948, ref:Donoho_1995}&\cite{ref:Fryzlewicz_2004, ref:Donoho_1995}&\cite{ref:Kolaczyk_1999b}&\cite{ref:Timmermann_1999}\\
\hline\hline
Mean&0.050 & 0.463 & 0.523 & 0.523 & 0.538 & 0.547 & 0.172 & 0.132 &
0.214 & 0.524 \\
Median&0.048 & 0.439 & 0.537 & 0.540 & 0.570 & 0.572 & 0.140 & 0.118 &
0.200 & 0.563 \\
Min&0.026 & 0.398 & 0.471 & 0.474 & 0.396 & 0.461 & 0.084 & 0.076 &
0.124 & 0.323 \\
Max&0.083 & 0.571 & 0.581 & 0.584 & 0.613 & 0.614 & 0.318 & 0.247 &
0.404 & 0.604 \\
Std.~Dev.&0.024 & 0.057 & 0.044 & 0.047 & 0.078 & 0.058 & 0.086 & 0.063
& 0.096 & 0.099 \\
\hline
Mean&0.088 & 0.520 & 0.571 & 0.574 & 0.600 & 0.607 & 0.257 & 0.207 &
0.260 & 0.574 \\
Median&0.083 & 0.512 & 0.585 & 0.584 & 0.610 & 0.606 & 0.247 & 0.196 &
0.244 & 0.614 \\
Min&0.045 & 0.425 & 0.504 & 0.510 & 0.523 & 0.546 & 0.159 & 0.124 &
0.158 & 0.378 \\
Max&0.149 & 0.611 & 0.628 & 0.635 & 0.674 & 0.670 & 0.383 & 0.321 &
0.446 & 0.660 \\
Std.~Dev.&0.041 & 0.061 & 0.043 & 0.043 & 0.060 & 0.048 & 0.083 & 0.077
& 0.102 & 0.098 \\
\hline
Mean&0.261 & 0.698 & 0.683 & 0.693 & 0.731 & 0.731 & 0.474 & 0.465 &
0.516 & 0.707 \\
Median&0.228 & 0.694 & 0.676 & 0.692 & 0.752 & 0.748 & 0.459 & 0.440 &
0.501 & 0.707 \\
Min&0.151 & 0.628 & 0.593 & 0.602 & 0.646 & 0.651 & 0.335 & 0.341 &
0.404 & 0.645 \\
Max&0.431 & 0.786 & 0.791 & 0.791 & 0.789 & 0.807 & 0.641 & 0.634 &
0.666 & 0.778 \\
Std.~Dev.&0.108 & 0.053 & 0.063 & 0.060 & 0.056 & 0.056 & 0.121 & 0.116
& 0.103 & 0.055 \\
\end{tabular}
\end{table*}

We now consider an image reconstruction scenario using a test set of well-known 8-bit gray scale test images that feature frequently in the engineering literature: ``Barbara,'' ``boat,'' ``clown,'' ``fingerprint,'' ``house,'' ``Lena,'' and ``peppers.''  Corresponding pixel values are considered as the true underlying intensity function of interest; both noise level characterization and reconstruction results are reported in terms of signal-to-noise ratio (SNR) in decibels, a quantity proportional to log-MSE.  By way of competing approaches we consider~\cite{ref:Timmermann_1999,ref:Kolaczyk_1999b,ref:Donoho_1995,ref:Portilla_2003}, with~\cite{ref:Donoho_1995,ref:Portilla_2003} used in conjunction with the variance stabilization methods of~\cite{ref:Anscombe_1948,ref:Fryzlewicz_2004}.  Implementations were set at an equal baseline implementation comprising a 3-level undecimated Haar wavelet decomposition, with no a priori neighborhood structures assumed amongst the coefficients.

The performance of the Skellam methods proposed here offers noticeable improvements over alternative approaches, in terms of visual quality (Fig.~\ref{fig:example3}), mean-squared error (Table~\ref{tab:SNR}), and perceptual error (Table~\ref{tab:SSIM}).  In terms of visual quality, we have generally observed that the proposed Skellam Bayes approaches yield restored images in which the spatial smoothing is appropriately locally adaptive---for example, these methods yield effective noise attenuation in both bright (see forehead) and dark (see
black background) regions of the example image shown in Fig.~\ref{fig:example3}.  A comparison of Figs.~\ref{fig:example3}(c) and~(f) reveals the importance of incorporating the scaling coefficient $s$ explicitly in the estimator; images processed via SBT tended to be similar to those for which SH was used,  but with softer edges.  In comparison, methods based on variance stabilization typically fail to completely resolve the heteroscedasticity of the underlying process, as evidenced by the under- and over-smoothed noise in bright regions such as the forehead and hair textures of Fig.~\ref{fig:example3}(g). The Bayesian method of~\cite{ref:Timmermann_1999} typically yields far smoother output images, in which texture information is almost entirely lost; see, for example, the hair in Fig.~\ref{fig:example3}(h). With the exception of SS, Skellam-based estimation methods suffer considerably less from the reconstruction artifacts typically associated with wavelet-based denoising, as can be seen in the cheek structure of the ``clown'' image.

We also report numerical evaluations of estimator performance in this setting, by way of both SNR in Table~\ref{tab:SNR} and the widely-used perceptual error metric of Structural Similarity Index (SSIM)~\cite{ref:Wang_2004} in Table~\ref{tab:SSIM}, for input SNR of 0, 3, and 10~dB.  The results readily confirm that Skellam-based approaches outperform competing alternatives, with only that of~\cite{ref:Timmermann_1999} remaining competitive---though as described above, its oversmoothing results in a great deal of loss of texture.  The SkellamShrink adjusted-threshold hybrid (SH) method measures the best in terms of both SNR and SSIM, with other Skellam-based approaches generally outperforming all alternatives save for~\cite{ref:Timmermann_1999}.

\section{Discussion}
\label{sec:discussion}

In this article we derived new techniques for wavelet-based Poisson intensity estimation by way of the Skellam distribution.  Two main theorems, one showing the near-optimality of Bayesian shrinkage and the other providing for a means of frequentist unbiased risk estimation, served to yield new estimators in the Haar transform domain, along with low-complexity algorithms for inference.  A simulation study using standard wavelet test functions as well as test images confirms that our approaches offer appealing alternatives to
existing methods in the literature---and indeed subsume existing variance-stabilization approaches such as Haar-Fisz by yielding exact unbiased risk estimates---along with a substantial improvement for the case of enhancing image data degraded by Poisson variability.  We expect further improvements for specific applications in which correlation structure can be assumed a priori amongst Haar coefficients, in a manner similar to the gains reported by~\cite{ref:Portilla_2003} for the case of image reconstruction in the presence of additive noise.

\bibliographystyle{acmtrans-ims}

\end{document}